\newcommand{\C}{\ensuremath{\zeta}}
\newcommand{\X}{\ensuremath{\mathcal{X}}}
\newcommand{\Set}{\ensuremath{\mathcal{M}}}
\newcommand{\R}{\ensuremath{\mathcal{R}}}
\newcommand{\Xer}[2]{\ensuremath{\mathcal{X}_{#1 #2}}}
\newcommand{\mipShort}{minIP}
\title{Graph Clustering with Surprise: Complexity and Exact Solutions\thanks{This work was partially supported by the DFG under grant WA 654/19-1}}
\author{Tobias Fleck, Andrea Kappes \and Dorothea Wagner}
\institute{Institute of Theoretical Informatics, Karlsruhe Institute of Technology, Germany} 
\begin{document}
\maketitle

\begin{abstract}
Clustering graphs based on a comparison of the number of links within clusters and the expected value of this quantity in a random graph has gained a lot of attention and popularity in the last decade. 
Recently, Aldecoa and Mar{\'\i}n proposed a related, but slightly different approach leading to the quality measure \emph{surprise}, and reported good behavior in the context of synthetic and real world benchmarks.
We show that the problem of finding a clustering with optimum surprise is $\mathcal{NP}$-hard.
Moreover, a bicriterial view on the problem permits to compute optimum solutions for small instances by solving a small number of integer linear programs, and leads to a polynomial time algorithm on trees.
\end{abstract}
\section{Introduction}
\label{sec:introduction}
\emph{Graph clustering}, i.e., the partitioning of the entities of a network into densely connected groups, has received growing attention in the literature of the last decade, with applications ranging from the analysis of social networks to recommendation systems and bioinformatics~\cite{f-c-09}.
Mathematical formulations thereof abound; for an extensive overview on different approaches see for example the reviews of Fortunato~\cite{f-c-09} and Schaeffer~\cite{s-gc-07}.

One line of research that recently gained a lot of popularity is based on \emph{null models}, the most prominent objective function in this context being the \emph{modularity} of a clustering~\cite{ng-fecsn-04}.
Roughly speaking, the idea behind this approach is to compare the number of edges within the same cluster to its expected value in a random graph that inherits some properties of the graph given as input.

In a wider sense, the measure called~\emph{surprise} that has recently been suggested as an alternative to modularity is also based on a null model, although, compared to modularity and its modifications~\cite{f-c-09}, it uses a different tradeoff between the observed and expected number of edges within clusters.
Surprise is used as a quality function in the tools UVCLUSTER and Jerarca to analyze protein interaction data~\cite{amm-icapi-05,am-jeacn-10}.
The authors' main arguments for using surprise instead of modularity is that it exhibits better behavior with respect to synthetic benchmarks and, empirically, it does not suffer to the same extent from the \emph{resolution limit} of modularity~\cite{bf-rlcd-07}, i.e.~the tendency to merge small natural communities into larger ones~\cite{am-dncss-11,am-e-13,am-s-13}.
However,  these results are hard to assess, since a metaheuristic is used instead of directly optimizing the measure. 
It chooses among a set of clusterings produced by general clustering algorithms the one that is best with respect to surprise. 

In this work, we take first steps towards a theoretical analysis of surprise.
We show that the problem of finding a clustering with optimal surprise is $\mathcal{NP}$-hard in general and polynomially solvable on trees. 
Moreover, we formulate surprise as a bicriterial problem, which allows to find provably optimal solutions for small instances by solving a small number of integer linear programs.

\vspace{2ex}
\noindent
\textbf{Notation.} 
All graphs considered are 
unweighted, undirected and simple, i.e.~they do not contain loops or parallel edges. 
A clustering $\C$ of a graph $G=(V,E)$ is a partitioning of $V$.
Let $n:=|V|$ and $m:=|E|$ denote the number of vertices and edges of $G$, respectively.
If $C$ is a cluster in $\C$, $i_e(C)$ denotes the number of \emph{intracluster edges} in $C$, i.e., the number of edges having both endpoints in $C$.
Similarly, $i_p(C) := \binom{|C|}{2}$ is the number of vertex pairs in $C$.
Furthermore, let $p := \binom{n}{2}$ be the number of vertex pairs in $G$, $i_p(\C) := \sum_{C \in \C} i_p(C)$ be the total number of intracluster vertex pairs and $i_e(\C) := \sum_{C \in \C} i_e(C)$ the total number of intracluster edges.
If the clustering is clear from the context, we will sometimes omit $\C$ and just write $i_p$ and $i_e$.   
To ease notation, we will allow binomial coefficients $\binom{n}{k}$ for all $n$ and $k \in \mathbb{N}$.
If $k > n$, $\binom{n}{k} = 0$ by definition. 

\section{Definition and Basic Properties}
Let $\C$ be a clustering of a graph $G=(V,E)$ with $i_e$ intracluster edges.
Among all graphs labeled with vertex set $V$ and exactly $m$ edges, we draw a graph $\mathcal{G}$ uniformly at random. 
The surprise $S(\C)$ of this clustering is then the probability that $\mathcal{G}$ has at least $i_e$ intracluster edges with respect to $\C$.
The lower this probability, the more \emph{surprising} it is to observe that many intracluster edges within $G$, and hence, the better the clustering.
The above process corresponds to an urn model with $i_p(\C)$ white and $p-i_p(\C)$ black balls from which we draw $m$ balls without replacement.
The probability to draw at least $i_e$ white balls then follows a hypergeometric distribution, which leads to the following definition\footnote{This is the definition used in the original version~\cite{amm-icapi-05}; later on, it was replaced by maximizing $-\log_{10} S(\C)$, which is equivalent with respect to optimum solutions.}; the lower $S(\C)$, the better the clustering:
$$S(\C) := \sum_{i=i_e}^{m} \frac{\binom{i_p}{i} \cdot \binom{p - i_p}{m-i}}{\binom{p}{m}}$$

\noindent
\textbf{Basic Properties.} 
For a fixed graph, the value of $S$ only depends on two variables, $i_p$ and $i_e$.
To ease notation, we will use the term $S(i_p, i_e)$ for the value of a clustering with $i_p$ intracluster pairs and $i_e$ intracluster edges.
The urn model view 
yields some simple properties 
that lead to a better understanding of how surprise behaves, and that are heavily used in the $\mathcal{NP}$-hardness proof.

\begin{lemma}
Let $i_e$, $i_p$, $p$ and $m$ be given by a clustering, i.e.~$0\leq i_e \leq i_p \leq p$, $i_e \leq m$ and $m-i_e \leq p-i_p$.
Then, the following statements hold:
\begin{enumerate}
 \item[(i)] $S(i_p, i_e+1) < S(i_p, i_e).$
 \item[(ii)] If $i_e > 0$, then $S(i_p-1, i_e) < S(i_p, i_e)$
 \item[(iii)] If $p - i_p > m - i_e$, then $S(i_p+1, i_e+1) < S(i_p, i_e).$
\end{enumerate}
  \label{lem:basic_props}
\end{lemma}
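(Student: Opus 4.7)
The plan is to interpret $S(i_p, i_e)$ probabilistically as $\Pr[X \geq i_e]$, where $X$ counts the white balls obtained when drawing $m$ balls uniformly without replacement from an urn with $i_p$ white and $p - i_p$ black balls, exactly as described in the paragraph preceding the definition. All three statements then reduce to monotonicity of this tail probability, which I would verify either by isolating terms of the defining sum (for part (i)) or by a coupling argument (for parts (ii) and (iii)).

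For part (i), the identity
$$S(i_p, i_e) - S(i_p, i_e+1) \;=\; \frac{\binom{i_p}{i_e}\binom{p - i_p}{m - i_e}}{\binom{p}{m}}$$
is immediate from the defining sum. The hypotheses $0 \leq i_e \leq i_p$ and $0 \leq m - i_e \leq p - i_p$ ensure that both binomial coefficients on the right are strictly positive, so the difference is positive.

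For part (ii), I would label the balls $1, \ldots, p$ and, on a common probability space, draw a uniformly random $m$-subset $T$. Treating the balls $\{1, \ldots, k\}$ as white defines a random variable $X_k$, and the recursion $X_{i_p} = X_{i_p - 1} + \mathbf{1}[i_p \in T]$ gives $X_{i_p - 1} \leq X_{i_p}$ pointwise, hence $\Pr[X_{i_p - 1} \geq i_e] \leq \Pr[X_{i_p} \geq i_e]$. Strict inequality requires an outcome with $X_{i_p-1} = i_e - 1$ and $i_p \in T$; counting such outcomes gives $\binom{i_p - 1}{i_e - 1}\binom{p - i_p}{m - i_e}$ favorable draws out of $\binom{p}{m}$, which is positive precisely when $i_e \geq 1$ under the given constraints, matching the hypothesis of (ii).

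Part (iii) is analogous but with the coupling $X_{i_p + 1} = X_{i_p} + \mathbf{1}[i_p + 1 \in T]$. This immediately yields the inclusion $\{X_{i_p+1} \geq i_e + 1\} \subseteq \{X_{i_p} \geq i_e\}$, hence $S(i_p+1, i_e+1) \leq S(i_p, i_e)$. The difference event $\{X_{i_p} = i_e,\ i_p + 1 \notin T\}$ has exactly $\binom{i_p}{i_e}\binom{p - i_p - 1}{m - i_e}$ favorable outcomes, and the second factor is nonzero exactly when $m - i_e \leq p - i_p - 1$, i.e., when the assumption $p - i_p > m - i_e$ holds. I expect the only real subtlety, and the reason each side condition in the lemma is phrased the way it is, to be precisely this bookkeeping that pins down when strict inequality (rather than merely weak monotonicity) kicks in; stochastic dominance itself is standard.
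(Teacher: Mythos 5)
Your proof is correct and takes essentially the same route as the paper: all three parts are reduced to the hypergeometric urn model, part (i) by isolating the $i=i_e$ term of the defining sum, and parts (ii) and (iii) by recoloring a single ball and comparing the two tail events over the common sample space of $m$-element subsets, which is precisely the paper's counting of subsets before and after painting one ball. The only difference is that you are more quantitative, computing the exact cardinality of each difference event where the paper merely exhibits one witnessing subset (and declares (i) and (ii) obvious).
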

\begin{proof}
 Statement (i) is obvious.
 Similarly, statement (ii) is not hard to see if we recall that $S(i_p-1, i_e)$ corresponds to the probability to draw at least $i_e$ white balls after replacing one white ball with a black one.
 
 For statement (iii), we show that the number $k_1$ of $m$-element subsets of the set of all balls containing at least $i_e$ white balls is larger than the number $k_2$ of $m$-element subsets containing at least $i_e+1$ white balls after painting one black ball $b$ white.
 Any subset $A$ that contributes to $k_2$ also contributes to $k_1$, as at most one ball in $A$ got painted white.
 On the other hand, every $m$-element subset not containing $b$ that contains exactly $i_e$ white balls contributes to $k_1$, but not to $k_2$.
 As there are at least $i_e$ white balls, and $p-i_p > m-i_e$ implies that there are at least $m-i_e+1$ black balls, there is at least one subset with these properties.
 Hence $k_1 > k_2$, which is equivalent to $S(i_p+1, i_e+1) < S(i_p, i_e)$.\qed
\end{proof}

In other words, the value of surprise improves the more edges and the less vertex pairs within clusters exist.
Moreover, part (iii) shows that if we increase the number of intracluster edges such that the number of \emph{intracluster non-edges}, i.e., vertex pairs within clusters that are not linked by an edge, does not increase, this leads to a clustering with strictly smaller surprise.
This immediately yields some basic properties of optimal clusterings with respect to surprise.
Part (i) of the following proposition is interesting as it shows that optimal clusterings always fulfill the assumptions of Lemma~\ref{lem:basic_props}(ii)-(iii).
\begin{proposition}
 Let $G=(V,E)$ be a graph that has at least one edge and that is not a clique and $\C$ be an optimal clustering of $G$ with respect to surprise. Then,
 \begin{enumerate}
  \item[(i)] $i_e(\C) > 0$ and $p-i_p(\C) > m-i_e(\C)$
  \item[(ii)] $1 < |\C| < |V|$
  \item[(iii)] $\C$ contains at least as many intracluster edges as any clustering $\C'$ of $G$ into cliques.
  \item[(iv)] Any cluster in $\C$ induces a connected subgraph.
 \end{enumerate}
 \label{lem:props_opt_sol}
\end{proposition}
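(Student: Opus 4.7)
The plan hinges on one identity and one benchmark clustering. \emph{Identity:} whenever $p - i_p = m - i_e$, every $j < i_e$ has $m - j > p - i_p$ and hence $\binom{p - i_p}{m - j} = 0$, so the sum defining $S(i_p, i_e)$ may be extended down to $j = 0$ without change; the extended sum collapses to $\binom{p}{m}/\binom{p}{m} = 1$ by Vandermonde's identity, giving $S = 1$. \emph{Benchmark:} the clustering that groups a single edge $\{u,v\}$ and leaves every other vertex as a singleton has $i_p = i_e = 1$ and so $S = m/p$, strictly less than $1$ because $G$ is not a clique. With these in hand, parts (i) and (ii) fall out: in (i), both $i_e(\C) = 0$ and $p - i_p(\C) = m - i_e(\C)$ force $S(\C) = 1$, which the benchmark beats, contradicting optimality; for (ii), $|\C| = 1$ realises $p - i_p = m - i_e = 0$ and $|\C| = |V|$ realises $i_e = 0$, both excluded by (i).

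For part (iii), I would first derive the auxiliary bound $S(i_p, i_e) \ge S(i_e, i_e) = \binom{p - i_e}{m - i_e}/\binom{p}{m}$ for every clustering, by iterating Lemma~\ref{lem:basic_props}(ii) to push $i_p$ down to $i_e$ (valid since $i_e(\C) > 0$ by (i)); equality holds exactly for clique clusterings, where $i_p = i_e$. For any clique clustering $\C'$, optimality then gives the chain
\[\binom{p - i_e(\C)}{m - i_e(\C)} \le S(\C)\binom{p}{m} \le S(\C')\binom{p}{m} = \binom{p - i_e(\C')}{m - i_e(\C')}.\]
Since $m < p$, the ratio $f(x+1)/f(x) = (m - x)/(p - x)$ shows that $f(x) := \binom{p - x}{m - x}$ is strictly decreasing on $\{0, \ldots, m\}$, so the above inequality forces $i_e(\C) \ge i_e(\C')$.

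For part (iv), if some $C \in \C$ induced a disconnected subgraph with components $C_1, \ldots, C_r$ ($r \ge 2$), replacing $C$ by the $C_j$'s preserves $i_e$ (no edge crosses a component boundary) while strictly decreasing $i_p$ by strict subadditivity of $\binom{\cdot}{2}$; Lemma~\ref{lem:basic_props}(ii) together with $i_e(\C) > 0$ then delivers a strictly smaller $S$, contradicting optimality. The main obstacle is the Vandermonde collapse used to establish (i): it is the only non-routine step, after which the remaining parts reduce to monotonicity and a single edge-preserving split.
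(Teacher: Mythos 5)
Your proof is correct and follows essentially the same route as the paper: degenerate cases plus a single-edge benchmark for (i), reduction of (ii) to (i), iterated use of Lemma~\ref{lem:basic_props}(ii) to reach the diagonal $S(i_e,i_e)$ for (iii), and the component-splitting argument for (iv). The only (harmless) deviations are that you compute the benchmark value $m/p$ and the diagonal value $\binom{p-i_e}{m-i_e}/\binom{p}{m}$ explicitly and conclude (iii) from the monotonicity of $x\mapsto\binom{p-x}{m-x}$, where the paper instead invokes Lemma~\ref{lem:basic_props}(iii) to walk along the diagonal.
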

\begin{proof}
(i): If $i_e(\C) = 0$ or $p - i_p(\C) = m - i_e(\C)$, it can be easily seen that $S(\C)=1$.
On the other hand, let us consider a clustering $\C'$ where each cluster contains one vertex, except for one cluster that contains two vertices linked by an edge $e$.
As $m<p$, there is at least one labeled graph on $V$ with $m$ edges that does not contain $e$.

(ii): 
If $|\C|=1$, $p-i_p(\C) = 0 = m-i_e(\C)$ and if $|\C|=|V|$, $i_e(\C)=0$.
The statement now follows from (i).

(iii): Let us assume that $i_e(\C) < i_e(\C')$.
Lemma~\ref{lem:basic_props}(ii) can be used to show that $S(\C) = S\bigl(i_p(\C), i_e(\C)\bigr) \geq S\bigl(i_e(\C), i_e(\C)\bigr)$ and from Lemma~\ref{lem:basic_props}(iii), it follows that $S\bigl(i_e(\C), i_e(\C)\bigr) > S\bigl(i_e(\C'), i_e(\C')\bigr) = S(\C')$.

(iv): Follows from Lemma~\ref{lem:basic_props}(ii) and the fact that splitting a disconnected cluster into its connected components decreases the number of intracluster pairs and does not affect the number of intracluster edges. \qed
\end{proof}

\vspace{2ex}
\noindent
\textbf{Bicriterial View.} 
\label{sec:connection}
From Lemma~\ref{lem:basic_props}, it follows that an optimal solution with respect to surprise is \emph{pareto optimal} with respect to (maximizing) $i_e$ and (minimizing) $i_p$.
Interestingly, this also holds for a simplification of modularity whose null model does not take vertex degrees into account and that was briefly considered by Reichardt and Bornholdt~\cite{rb-smcd-06,rb-dfcsc-04}, although the tradeoff between the two objectives is different. 
Hence, an optimal clustering 
can be found by solving the following optimization problem for all $0 \leq k \leq m$ and choosing the solution that optimizes surprise.
\begin{problem}[\mipShort]
 Given a graph $G$ and an integer $k>0$, find a clustering $\C$ with $i_e(\C) = k$, if there exists one, such that $i_p(\C)$ is minimal.
\end{problem}
Unfortunately, the decision variant of \mipShort~is $\mathcal{NP}$-complete even on bipartite graphs, as it is equivalent to the unweighted Minimum Average Contamination problem~\cite{lt-tcamc-11}.
However, the formulation of \mipShort~does not involve binomial coefficients and is thus in some aspects easier to handle.
For example, in contrast to surprise, it can be easily cast into an integer linear program.
We will use this in Sect.~\ref{sec:algorithms} to compute optimal solutions for small instances.

One might guess from the $\mathcal{NP}$-completeness of \mipShort~that surprise minimization is also $\mathcal{NP}$-complete.
However, there is no immediate reduction from \mipShort~to the decision variant of surprise optimization, as the number of intracluster edges in an optimal clustering with respect to surprise is not fixed.
In the following section, we will therefore give a proof for the hardness of finding a clustering with optimal surprise.

\section{Complexity}\label{sec:complexity}
We show $\mathcal{NP}$-completeness of the corresponding decision problem:
\begin{problem}[\textsc{Surprise Decision (SD)}]
 Given a graph $G$ and a parameter $k>0$, decide whether there exists a clustering $\C$ of $G$ with $S(\C) \leq k$.
\end{problem}
As $S$ can be clearly evaluated in polynomial time, \textsc{SD}~is in $\mathcal{NP}$.
To show $\mathcal{NP}$-completeness, we use a reduction from \textsc{Exact Cover by 3-Sets}~\cite{gj-ci-79}:
\begin{problem}[\textsc{Exact Cover by 3Sets (X3S)}]
 Given a set $\X$ of elements and a collection $\Set$ of 3-element subsets of $\X$, 
 decide whether there is a subcollection $\R$ of $\Set$ such that each element in $\X$ is contained in exactly one member of $\R$.
\end{problem}

\begin{wrapfigure}[10]{r}{.4\textwidth}
\vspace{-8ex}
 \begin{center}
   \includegraphics[scale = 0.7]{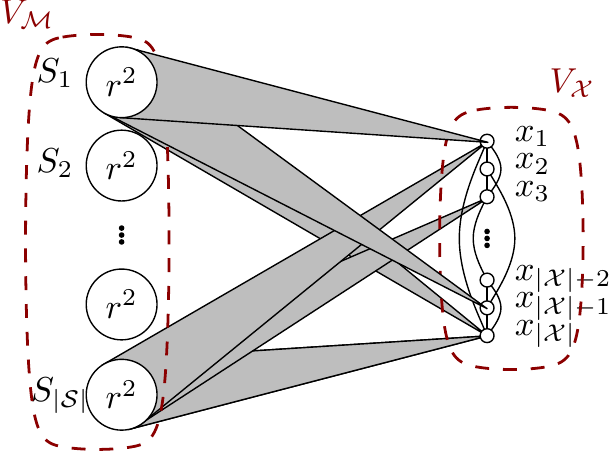}
 \end{center}
 \vspace{-2ex}
 \caption{Illustration for reduction.}
 \label{fig:reduction}
\end{wrapfigure}
Let $I = (\X, \Set)$ be an instance of \textsc{X3S}.
The reduction is based on the idea of implanting large disjoint cliques in the transformed instance that correspond to the subsets in $\Set$. The size of these cliques is polynomial in $|\Set|$, but large enough to ensure that they can neither be split nor merged in a clustering with low surprise.
Hence, each of these cliques induces a cluster.
The transformed instance further contains a vertex for each element in $\X$ that is linked with the cliques corresponding to the subsets it is contained in.
The idea is to show that in a clustering $\C$ with low surprise, each of these vertices is contained in a cluster induced by exactly one subset, and each cluster contains either three ``element vertices'' or none, which induces an exact cover of $\X$.

In the following, we will assume without loss of generality\footnote{Otherwise, the instance is trivially non-solvable.} that each element of $\X$ belongs to at least one set in $\Set$, hence $|\X| \leq 3|\Set|$.
We construct an instance $I' = (G,k)$ of \textsc{SD} in the following way.
Let $r := 3|\Set|$.
First, we map each set $M$ in $\Set$ to an $r^2$-clique $C(M)$ in $G$.
Furthermore, we introduce an $|\X|$-clique to $G$, where each of the vertices $v(x)$ in it is associated with an element $x$ in $\X$.
We link $v(x)$ with each vertex in $C(M)$, if and only if $x$ is contained in $M$.
Let $V_\X$ be the set containing all vertices corresponding to elements in $\X$, and $V_\Set$ the set of vertices corresponding to subsets.
Fig.~\ref{fig:reduction}~illustrates the reduction, clearly, it is polynomial.
In the proof, we will frequently use the notion \emph{for large $r$, statement $A(r)$ holds}.
Formally, this is an abbreviation for the statement that there exists a constant $c>0$ such that for all $r \geq c$, $A(r)$ is true.
Consequently, the reduction only works for instances that are larger than the maximum of all these constants, which suffices to show that \textsc{SD} is $\mathcal{NP}$-complete\footnote{Smaller instances have constant size and can therefore be trivially solved by a brute-force algorithm.}.
 
\begin{lemma}
 \label{lem:os_many_ie}
Let $\C$ be an optimal clustering of $G$ with respect to $S$. Then, $i_e(\C) \geq |\Set| \cdot \binom{r^2}{2}$.
\end{lemma}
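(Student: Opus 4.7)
The plan is to apply Proposition~\ref{lem:props_opt_sol}(iii), which says that an optimal clustering contains at least as many intracluster edges as any clustering of $G$ into cliques. To use it, I need to exhibit a concrete clique clustering $\C'$ of the constructed graph $G$ whose intracluster edge count is $|\Set| \cdot \binom{r^2}{2}$.

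The natural candidate is the clustering $\C'$ that places each implanted clique $C(M)$ (for $M \in \Set$) into its own cluster, and puts every element vertex $v(x) \in V_{\X}$ into its own singleton cluster. Each $C(M)$ is, by construction, an $r^2$-clique and therefore a valid clique cluster; singletons are trivially cliques. Summing intracluster edges across $\C'$ gives exactly $|\Set|$ contributions of $\binom{r^2}{2}$ from the clique clusters and $0$ from the singletons, yielding $i_e(\C') = |\Set| \cdot \binom{r^2}{2}$.

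Before invoking Proposition~\ref{lem:props_opt_sol}(iii), I should check its hypotheses for $G$: it must have at least one edge and not be a clique. The first is immediate since each $C(M)$ contributes edges (for $|\Set| \geq 1$, which we may assume). For the second, note that two distinct cliques $C(M_1), C(M_2)$ are not joined by any edge in $G$ — their vertices are only linked to vertices in $V_{\X}$ — so $G$ is not a clique for large $r$. With the hypotheses met, Proposition~\ref{lem:props_opt_sol}(iii) applied to $\C$ and $\C'$ gives $i_e(\C) \geq i_e(\C') = |\Set| \cdot \binom{r^2}{2}$, as required.

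I do not expect any real obstacle here; the lemma is essentially an immediate corollary of the already-proved property of optimal clusterings, provided one identifies the right clique clustering to compare against. The only subtlety is a routine check that $G$ satisfies the mild hypotheses of Proposition~\ref{lem:props_opt_sol}, which holds for all non-degenerate instances of the reduction.
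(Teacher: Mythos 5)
Your proof is correct and follows exactly the paper's argument: both invoke Proposition~\ref{lem:props_opt_sol}(iii) with the comparison clustering consisting of the cliques $C(M)$ plus singletons on $V_\X$, which has $|\Set|\cdot\binom{r^2}{2}$ intracluster edges. Your extra check of the proposition's hypotheses is a harmless (and reasonable) addition that the paper leaves implicit.
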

\begin{proof}
Follows from Proposition~\ref{lem:props_opt_sol}(iii) and the fact that the clustering whose clusters are the cliques in $V_\Set$ and the singletons in $V_\X$ is a clustering into cliques with $|\Set| \cdot \binom{r^2}{2}$ intracluster edges. \qed
\end{proof}
Next, we give an upper bound on the number of \emph{intracluster non edges}, i.e., vertex pairs within clusters that are not linked by an edge, in an optimal clustering of $G$.
Its (rather technical) proof makes use of the asymptotic behavior of binomial coefficients and can be found in App.~\ref{app:asymptotics}.

\begin{lemma}
 Let $\C$ be an optimal clustering of $G$ with respect to surprise. Then, for large $r$, $i_p(\C) - i_e(\C) \leq \frac{r^4}{2}.$
 \label{lem:os_few_intra_non}
\end{lemma}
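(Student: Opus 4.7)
The plan is to argue by contradiction. Assume $\C$ is an optimal clustering with $q := i_p(\C) - i_e(\C) > r^4/2$, and compare it to the reference clustering $\C^*$ whose clusters are the $|\Set|$ subset-cliques $C(M)$ together with one singleton per element vertex. Every cluster of $\C^*$ is a clique, so $A := i_p(\C^*) = i_e(\C^*) = |\Set|\binom{r^2}{2}$; since $\binom{A}{i}=0$ for $i>A$, the surprise of $\C^*$ collapses to a single summand, $S(\C^*) = \binom{p-A}{m-A}/\binom{p}{m}$. Lower bounding $S(\C)$ by its $i=j_e$ summand (with $j_e := i_e(\C)$, $j_p := i_p(\C)$, and $\delta := j_e - A \geq 0$ by Lemma~\ref{lem:os_many_ie}), it suffices to establish
\begin{equation*}
\binom{j_p}{j_e}\binom{p-j_p}{m-j_e} \,>\, \binom{p-A}{m-A},
\end{equation*}
since this would force $S(\C) > S(\C^*)$, contradicting optimality.

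The relevant orders of magnitude are $p = \binom{n}{2} = \Theta(r^6)$, $m = \Theta(r^5)$, $A = \Theta(r^5)$, and $m - A = \Theta(r^3)$ (the extra edges come from the $|\X|$-clique and the links from element vertices to subset cliques), giving $p/(m-A) = \Theta(r^3)$ and $m/p = \Theta(1/r)$. Writing $\binom{j_p}{j_e} = \binom{j_p}{q}$ and expanding the ratio $\binom{p-A}{m-A}/\binom{p-j_p}{m-j_e}$ as the telescoping product
\begin{equation*}
\frac{\prod_{i=0}^{\delta+q-1}(p-A-i)}{\prod_{i=0}^{\delta-1}(m-A-i)\,\prod_{i=0}^{q-1}(p-m-i)},
\end{equation*}
the target inequality reduces to $\log\binom{j_p}{q} > \delta\log(p/(m-j_e+1)) + q\log(p/(p-m-q+1))$. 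The $\delta$-term is at most $O(r^3\log r)$ since $\delta \le m-A = O(r^3)$; the $q$-term is $O(q/r)$ in the ``moderate'' regime $m+q = o(p)$, but blows up as $q$ approaches $p-m$.

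From $j_p \geq A + q$ together with $\binom{j_p}{q} \geq (j_p/q)^q$ I obtain $\log\binom{j_p}{q} \geq q\log(1 + A/q)$, which is $\Omega(q\log r) = \Omega(r^4\log r)$ when $q \leq A$ and $\Omega(q)$ when $q > A$; either subcase dominates the right-hand side in the moderate regime, establishing the displayed inequality. The main obstacle is the extreme regime $q = \Theta(p-m)$, where the ``first summand only'' bound on $S(\C)$ no longer suffices because $\log(p/(p-m-q+1))$ itself becomes $\Theta(\log r)$. I would handle this case separately by a direct hypergeometric argument: when $i_p(\C)$ is that close to $p$, a constant fraction of the hypergeometric mass sits above $j_e$, so $S(\C) = \Omega(1)$, whereas the product expansion $S(\C^*) = \prod_{i=0}^{A-1}(m-i)/(p-i) \leq (m/p)^A = e^{-\Theta(r^5\log r)}$ is already subconstant. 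Stitching the two regimes together and absorbing the various thresholds into the ``for large $r$'' clause of the statement then completes the proof.
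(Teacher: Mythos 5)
Your skeleton matches the paper's: compare $\C$ against the clique clustering $\C^*$, observe that $S(\C^*)$ collapses to the single term $\binom{p-A}{m-A}/\binom{p}{m}$, and lower-bound $S(\C)$ by its first summand. The gap is in the quantitative comparison. You commit to the lower index $q$ via $\binom{j_p}{j_e}=\binom{j_p}{q}\ge(1+A/q)^q$; this degenerates to roughly $e^{A}=e^{\Theta(r^5)}$ once $q\gg A$, so your intermediate claims ``$\Omega(q\log r)$ when $q\le A$'' and ``$\Omega(q)$ when $q>A$'' are both false as stated (at $q=A$ you only get $q\log 2$; at $q\gg A$ you only get $\approx A$). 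Meanwhile your upper bound on the ratio contains the term $q\log\bigl(p/(p-m-q+1)\bigr)\approx q(m+q)/p$, which for, say, $q=r^{5.9}$ is $\Theta(r^{5.8})$ --- still well inside your ``moderate regime'' $m+q=o(p)$, yet already larger than the $\Theta(r^5)$ you have on the left. So the chain of inequalities breaks long before the extreme regime. The fallback for $q=\Theta(p-m)$ is also unsound: $i_p=\Theta(p)$ does not put $j_e$ below the hypergeometric mean $mi_p/p$ (take $j_e$ near $m$ and $i_p\approx p/2$), so $S(\C)$ can be $e^{-\Theta(m)}$ rather than $\Omega(1)$; the needed comparison happens to survive there, but not by the argument you give.

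The repair is exactly what the paper does: drop the factor $\binom{p-j_p}{m-j_e}\ge 1$ altogether and lower-bound the single binomial $\binom{j_p}{j_e}$, choosing as lower index whichever of $j_e$ and $q=j_p-j_e$ does not exceed $j_p/2$ (the paper's Case 1/Case 2 split). In either case the lower index is $\Omega(r^4)$ --- by Lemma~\ref{lem:os_many_ie} in the first case and by the contradiction hypothesis $q>r^4/2$ in the second --- while the upper index is $\Omega(r^5)$, so $\binom{j_p}{j_e}\ge e^{\Omega(r^4\log r)}$, which alone dominates $\binom{p-A}{m-A}\le p^{\,m-A}=e^{O(r^3\log r)}$ uniformly over all regimes; the paper packages this as the asymptotic comparison Lemma~\ref{lem:As_Pol}. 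Your argument is salvageable along these lines, but as written it does not close.
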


This can now be used to show that an optimal clustering of $G$ is a clustering into cliques.
We start by showing that the cliques in $V_\Set$ cannot be split by an optimal clustering.
\begin{figure}[tbp]
  \begin{center}
   \includegraphics[scale = 1]{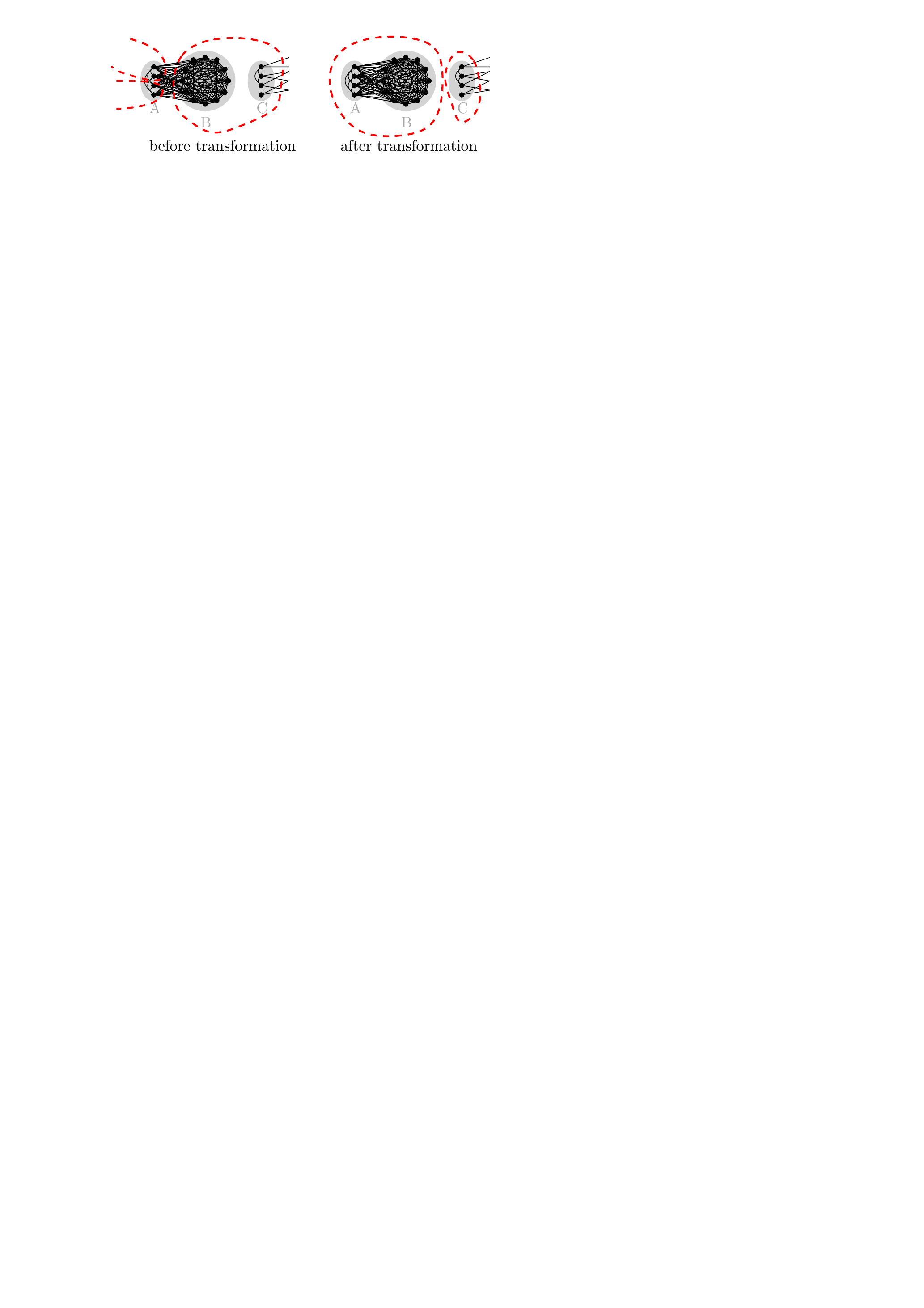}
 \end{center}
 \caption{Illustration for proof of Lemma~\ref{lem:clique_not_split}}
 \label{fig:Illu_Proof}
\end{figure}
  
  \begin{lemma}
 \label{lem:clique_not_split}
Let $r$ be large and $\C$ be an optimal clustering of $G$ with respect to $S$.
Then, the cliques $C(M)$ in $V_\Set$ are not split by $\C$.
\end{lemma}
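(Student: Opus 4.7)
The plan is to argue by contradiction: assume $\C$ is an optimal clustering in which $C(M)$ is split among clusters $C_1,\ldots,C_t$ ($t \geq 2$), with $a_i := |C_i \cap C(M)|$, $b_i := |C_i \setminus C(M)|$ and $a^* := a_1 = \max_i a_i$. Let $V_\X^M := \{v(x) : x \in M\}$ and $k_i := |V_\X^M \cap C_i|$. The crucial structural observation is that $V_\X^M$ consists of exactly the vertices outside $C(M)$ that are adjacent to any vertex of $C(M)$, so the number of edges between $C_i \cap C(M)$ and $C_i \setminus C(M)$ is precisely $a_i k_i$, and $\sum_i k_i \leq 3$.

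My first attempt would be the \emph{lift} modification $\C'$ that forms $\{C(M)\}$ as a new cluster while keeping each $C_i \setminus C(M)$ as a cluster. A direct calculation gives
\[
 i_e(\C') - i_e(\C) = \sum_{i<j} a_i a_j - \sum_i a_i k_i, \qquad i_p(\C') - i_p(\C) = \sum_{i<j} a_i a_j - \sum_i a_i b_i,
\]
so the change in intracluster non-edges is $-\sum_i a_i(b_i - k_i) \leq 0$ (non-edges cannot grow). Since $\sum_i a_i k_i = \sum_{x \in M} a_{j(x)} \leq 3a^*$ (where $j(x)$ denotes the cluster of $v(x)$) and $\sum_{i<j} a_i a_j \geq a^*(r^2 - a^*)$, this already forces $i_e(\C') > i_e(\C)$ whenever $a^* \leq r^2 - 4$. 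Proposition~\ref{lem:props_opt_sol}(i) supplies the preconditions of Lemma~\ref{lem:basic_props}, so iterating parts (iii) and (ii) yields $S(\C') < S(\C)$, contradicting optimality.

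The \emph{lopsided} case $a^* \in \{r^2-3,r^2-2,r^2-1\}$ is the main obstacle: here only $s := r^2 - a^* \in \{1,2,3\}$ stray clique vertices lie outside $C_1$, and the lift may fail to strictly increase $i_e$. I would instead combine two operations: (1) move all $s$ strays into $C_1$, and (2) evict at most a constant number of vertices of $B_1' := C_1 \setminus (C(M) \cup V_\X^M)$ into singleton clusters. Operation (1) gains at least $s(r^2-s-3) > 0$ edges but may raise non-edges by up to $s b_1'$, where $b_1' := |B_1'|$. Lemma~\ref{lem:os_few_intra_non} is decisive here: each vertex of $B_1'$ contributes $a^*$ non-edges within $C_1$, so $a^* b_1' \leq r^4/2$ and thus $b_1' \leq r^2/2 + O(1)$. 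Each eviction in~(2) reduces non-edges by at least $r^2$ (the evicted vertex is non-adjacent to all $r^2$ clique vertices of the absorbed $C_1$) at a cost of at most $k_1 + b_1' - 1 = O(r^2)$ edges. Choosing at most two evictions thus produces a composite modification with $\Delta i_e > 0$ and $\Delta i_p \leq \Delta i_e$, and Lemma~\ref{lem:basic_props} once more delivers $S(\C') < S(\C)$. The delicate accounting across the stray-absorption and eviction operations—tracking how the $\leq 3$ strays, the three $V_\X^M$-vertices and the bounded set $B_1'$ interact—is the main technical hurdle.
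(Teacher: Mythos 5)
Your proof is correct in outline and shares the paper's overall skeleton (local exchange arguments fed into Lemma~\ref{lem:basic_props}(i)/(iii) via Proposition~\ref{lem:props_opt_sol}(i), with Lemma~\ref{lem:os_few_intra_non} as the decisive external input), but the way you handle the hard case is genuinely different. Your first case is essentially the paper's Claim~1: the paper also lifts the whole clique $K$ into a fresh cluster, bounding the loss crudely by the $3r^2$ edges leaving $K$ and the gain by $\tfrac{7}{2}r^2$ via a two-region split, which forces every part to size at most $r^2-7$; your sharper accounting ($\sum_i a_i k_i \le 3a^*$ versus $\sum_{i<j}a_ia_j \ge a^*(r^2-a^*)$) pushes the threshold to $r^2-4$, a cosmetic improvement. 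The divergence is in the lopsided case. The paper uses Lemma~\ref{lem:os_few_intra_non} to show no two large parts cohabit (Claim~2), pigeonholes to find a large part $B$ whose cluster contains at most six foreign $V_\Set$ vertices, and then \emph{extracts} $A\cup B$ into a new cluster; you instead \emph{absorb} the strays into the dominant cluster $C_1$ and pay for the $s\,b_1'$ new non-edges by evicting members of $B_1'$, using Lemma~\ref{lem:os_few_intra_non} in a different role, namely to bound $b_1'\le r^2/2+O(1)$ and hence the edge cost of each eviction. Your variant has the advantage that it never needs to reason about what else lives in the clusters of the strays, and it handles the vertices of $V_\X^M$ sitting in $C_1$ cleanly (they simply stay put and remain adjacent to all of $C(M)$), whereas the paper's extraction step must argue that little is lost to the abandoned remainder $C$. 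One point you should make explicit when writing this up: the number of evictions cannot simply be fixed at two. For $s=1$ and $b_1'$ near its maximum $r^2/2+O(1)$, two evictions cost up to $2(b_1'+2)\approx r^2$ edges against a gain of only $r^2-4$, so $\Delta i_e$ can go negative; you must take $e=\lceil s\,b_1'/r^2\rceil\in\{0,1,2\}$ (which indeed satisfies $e<2s$ and $e\le b_1'$), and then both $\Delta i_e>0$ and $\Delta i_p\le\Delta i_e$ hold for large $r$ in all three subcases $s\in\{1,2,3\}$.
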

\begin{proof}
Assume that there is at least one clique that is split by $\C$.
$\C$ induces a partition of each clique that it splits.
We call the subsets of this partition the \emph{parts} of the clique.

\emph{Claim 1: Every clique $C(M)$ contains a part with at least $r^2-6$ vertices.}

\emph{Proof of Claim 1:}
Assume that there is a clique $K$ where each part has at most $r^2-7$ vertices.
We can now greedily group the parts in two roughly equal sized regions, such that the smaller region contains at least $7$ vertices and the larger region at least $r^2/2$ vertices.
Let us look at the clustering we get by removing the vertices in $K$ from their clusters and cluster them together.
The vertices in $K$ have in total $3r^2$ edges to vertices outside $K$ and we gain at least $7/2 \cdot r^2$ new intracluster edges between the regions.
Hence, the number of intracluster edges increases and the number of intracluster non-edges can only decrease.
By Lemma~\ref{lem:basic_props}(iii) and Lemma~\ref{lem:basic_props}(i), it can be seen that this operation leads to a clustering with better surprise, which contradicts the optimality of $\C$.

Let us now call the parts with size at least $r^2-6$ \emph{large parts} and the other parts \emph{small parts}.

\emph{Claim 2: No two large parts are clustered together.}

\emph{Proof of Claim 2:}
Assume that there is a cluster that contains more than one large part.
This cluster induces at least $(r^2-6)^2$ intracluster non-edges.
For large $r$, this is larger than $r^4/2$ and Lemma~\ref{lem:os_few_intra_non} tells us that $\C$ was not optimal.

A simple counting argument now yields the following corollary.

\emph{Corollary: There must exist a large part $B$ contained in a split clique whose cluster contains at most $|B|+6$ vertices in $V_\Set$}.

Let $B$ as in the corollary and $A$ be the set of the vertices that are in the same clique as $B$ but not in $B$ and $C$ be the set of vertices that are in the same cluster as $B$ but not in $B$.
Fig.~\ref{fig:Illu_Proof} illustrates this case.
We consider the clustering that we get by removing the vertices in $A$ and $B$ from their cluster and cluster them together.
The number of vertices in $A$ and $C$, respectively, is at most $6$, and each of these vertices has at most $3$ neighbors in $V_\X$.
Hence, we lose at most $36$ intracluster edges by this operation.
On the other hand, we gain at least $r^2-6$ intracluster edges between $A$ and $B$, thus, for large $r$, the number of intracluster edges increases.
Again, the number of intracluster non-edges can only decrease and by Lemma~\ref{lem:basic_props}(iii) and Lemma~\ref{lem:basic_props}(i), we get that this operation leads to a clustering with better surprise, which contradicts the optimality of $\C$. \qed 
\end{proof}

\begin{lemma}
\label{lem:cliques_apart}
 Let $r$ be large and $\C$ be an optimal clustering of $G$ with respect to $S$.
 Then, no two of the cliques in $V_\Set$ are contained in the same cluster.
\end{lemma}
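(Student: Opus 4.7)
The plan is to derive a contradiction from the bound of Lemma~\ref{lem:os_few_intra_non} by observing that merging two of the large cliques $C(M)$ would create far too many intracluster non-edges. The previous lemma (Lemma~\ref{lem:clique_not_split}) already guarantees that each clique $C(M)$ is wholly contained in a single cluster, so the only remaining possibility to be ruled out is that two such cliques occupy the same cluster.

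Concretely, I would suppose for contradiction that there exist distinct $M, M' \in \Set$ such that $C(M)$ and $C(M')$ lie in the same cluster $C \in \C$. The key structural observation from the construction is that edges in $G$ are of only two kinds: edges inside some clique $C(M)$ or inside the $|\X|$-clique on $V_\X$, and the cross-edges between $v(x) \in V_\X$ and vertices of $C(M)$ when $x \in M$. In particular, no edge of $G$ has one endpoint in $C(M)$ and the other in $C(M')$ for $M \neq M'$. Therefore every one of the $|C(M)| \cdot |C(M')| = r^2 \cdot r^2 = r^4$ pairs between $C(M)$ and $C(M')$ is an intracluster non-edge of $\C$.

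This already contradicts Lemma~\ref{lem:os_few_intra_non}, since $r^4 > r^4/2$ for every $r \geq 1$, and thus $\C$ cannot be optimal. No asymptotic analysis is required at this step beyond what is inherited from Lemma~\ref{lem:os_few_intra_non}.

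The main (very small) obstacle is really just justifying the disjointness and non-adjacency of $C(M)$ and $C(M')$ carefully from the construction, so that the count $r^4$ of intracluster non-edges is rigorous; once that is in place, the proof is a single application of Lemma~\ref{lem:os_few_intra_non}. The argument is strictly simpler than the preceding one because we do not need a local swap to improve surprise — the non-edge count already exceeds the global bound from Lemma~\ref{lem:os_few_intra_non}.
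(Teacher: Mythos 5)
Your proof is correct and follows the paper's own argument exactly: a cluster containing two of the $r^2$-cliques has at least $r^2\cdot r^2 = r^4$ intracluster non-edges (there are no edges between distinct cliques by construction), contradicting the bound $r^4/2$ of Lemma~\ref{lem:os_few_intra_non}. Your write-up merely spells out the non-adjacency justification that the paper leaves implicit.
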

\begin{proof}
 A cluster that contains two cliques in $V_\Set$ induces at least $r^4$ intracluster non-edges.
 The statement now follows from Lemma~\ref{lem:os_few_intra_non}. \qed
\end{proof}

\begin{lemma}
 \label{lem:vertices_assigned}
Let $r$ be large and $\C$ an optimal clustering of $G$ with respect to $S$.
Then, each $v(x)$ in $V_\X$ shares a cluster with a clique $C(M)$ such that $x \in M$.
\end{lemma}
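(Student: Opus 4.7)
The plan is to argue by contradiction: suppose some $v(x) \in V_\X$ lies in a cluster $C$ for which either (a) $C \subseteq V_\X$, or (b) $C \supseteq C(M')$ for some $M' \in \Set$ with $x \notin M'$. By Lemmas~\ref{lem:clique_not_split} and~\ref{lem:cliques_apart}, every cluster hosts at most one intact clique $C(M)$, so these two cases exhaust all possibilities. Pick any $M \in \Set$ with $x \in M$ (which exists by the standing assumption that every element of $\X$ belongs to some member of $\Set$) and let $C'$ be the unique cluster containing $C(M)$. The candidate improving operation is to shift $v(x)$ out of $C$ and into $C'$.

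The heart of the argument is a short bookkeeping calculation. Since $V_\X$ induces a clique in $G$ and $v(x)$ has no edges into $C(M')$ when $x \notin M'$, the number of intracluster edges incident to $v(x)$ changes from exactly $|S_C|-1$ (with $S_C := C \cap V_\X$) to exactly $r^2 + |S'|$ (with $S' := C' \cap V_\X$). Using $|S_C| \leq |V_\X| \leq r$, this gives
\[
\Delta i_e \;=\; r^2 + |S'| - (|S_C|-1) \;\geq\; r^2 - r + 1 \;>\; 0
\]
for large $r$. A direct computation of $\Delta i_p = |C'|-|C|+1$ then shows $\Delta(i_p - i_e) = 0$ in case (a) and $\Delta(i_p - i_e) = -r^2$ in case (b). In either case $i_e$ grows substantially while $i_p - i_e$ does not grow.

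With this in hand, I would chain the three parts of Lemma~\ref{lem:basic_props} to traverse from $(i_p, i_e)$ to $(i_p + \Delta i_p, i_e + \Delta i_e)$ along a path of strict improvements: part~(iii) realizes the segment in which $i_p$ and $i_e$ grow together, part~(ii) accommodates any net drop in $i_p$, and part~(i) adds any leftover edges. All preconditions hold along this path because the optimum $\C$ satisfies Proposition~\ref{lem:props_opt_sol}(i), and the operation never lowers $i_e$ to zero or raises $i_p - i_e$. Each step strictly decreases $S$, contradicting optimality.

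The main obstacle I expect is the somewhat finicky case analysis — uniformly handling cases (a) and (b) and choosing the correct part of Lemma~\ref{lem:basic_props} in each segment of the path. Once this bookkeeping is in place, the contradiction follows mechanically, and the conclusion is that every $v(x)$ must share its cluster with some clique $C(M)$ containing $x$.
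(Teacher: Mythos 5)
Your proposal is correct and follows essentially the same route as the paper: assume $v(x)$ is not clustered with any $C(M)$ containing $x$, move it into the cluster of such a clique, observe that the number of intracluster edges grows by roughly $r^2$ while the number of intracluster non-edges cannot increase, and conclude via Lemma~\ref{lem:basic_props} that this contradicts optimality. Your bookkeeping of $\Delta i_e$ and $\Delta(i_p-i_e)$ is just a more explicit version of the paper's one-line counting argument.
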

\begin{proof}From Lemma~\ref{lem:clique_not_split} and Lemma~\ref{lem:cliques_apart} we know that $\C$ clusters the vertices in $V_\Set$ according to the cliques we constructed.
 Assume that there is a vertex $v(x)$ in $V_\X$ that is not contained in any of the clusters induced by the sets containing $x$.
  Since each element in $\X$ is contained in at least one set in $\Set$,  there exists a clique $K$ in $V_\Set$ that contains $r^2$ neighbors of $v(x)$.
  As $v(x)$ has at most $|\X| -1$ neighbors in its own cluster, removing it from its cluster and moving it to the cluster of $K$ increases the number of intracluster edges.
  On the other hand, $x$ is linked with all vertices in its new cluster and thus, the number of intracluster non-edges cannot increase.
  Hence, this operation leads to a clustering with better surprise, which contradicts the optimality of $\C$. \qed
\end{proof}

\begin{theorem}
 For large $r$, $I=(\X,\Set)$ has a solution if and only if there exists a clustering $\C$ of $G$ with $S(\C) \leq k := {\binom{p}{m}}^{-1} \cdot \left({\binom{|\Set| \cdot r^2 + |\X|}{2} - |\Set| \cdot \binom{r^2}{2} - |\X| \cdot r^2 - |\X| \atop  (3|\Set| - |\X|) \cdot r^2 + \binom{|\X|}{2} - |\X|} \right)$.
\end{theorem}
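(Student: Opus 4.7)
The plan is to recognise the threshold $k$ as precisely the surprise value of the clustering induced by a hypothetical exact cover, and then to convert the inequality ``$S(\C) \leq k$'' into a purely combinatorial condition on the number of intracluster edges. Writing $i_e^* := |\Set|\binom{r^2}{2} + |\X|r^2 + |\X|$, I will show that $k = S(i_e^*, i_e^*)$, and that an optimal clustering of $G$ attains $i_e = i_e^*$ exactly when $I$ admits an exact cover.

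For the forward direction, given an exact cover $\R \subseteq \Set$, I would form the clustering whose clusters are $C(M) \cup \{v(x) : x \in M\}$ for $M \in \R$ and the bare clique $C(M)$ for $M \notin \R$. Each such cluster induces a clique in $G$, since the three element vertices associated with $M$ are pairwise adjacent inside the $|\X|$-clique and each is fully adjacent to $C(M)$; therefore $i_p = i_e$. Using the identity $\binom{r^2 + a}{2} = \binom{r^2}{2} + a r^2 + \binom{a}{2}$ and summing over the $|\X|/3$ clusters of size $r^2+3$ together with the $|\Set| - |\X|/3$ clusters of size $r^2$ gives $i_e = i_e^*$. Since $\binom{i_e}{i} = 0$ for $i > i_e$, the hypergeometric sum defining $S$ collapses to the single term $\binom{p - i_e^*}{m - i_e^*}/\binom{p}{m}$, and a direct computation of $m - i_e^*$ from $m = |\Set|\binom{r^2}{2} + 3|\Set|r^2 + \binom{|\X|}{2}$ matches the expression for $k$ exactly.

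For the converse, if any clustering satisfies $S \leq k$ then so does an optimal clustering $\C^*$. Lemmas~\ref{lem:clique_not_split}, \ref{lem:cliques_apart} and \ref{lem:vertices_assigned} force $\C^*$ to consist of exactly one cluster per clique $C(M_j)$, each augmented by $a_j$ element vertices drawn from $M_j$; thus $a_j \in \{0,1,2,3\}$ and $\sum_j a_j = |\X|$. Every such cluster is again a clique of $G$, so $i_p(\C^*) = i_e(\C^*) = |\Set|\binom{r^2}{2} + |\X|r^2 + \sum_j \binom{a_j}{2}$. A short ratio computation gives $S(i+1,i+1)/S(i,i) = (m-i)/(p-i) < 1$, so $S(i,i)$ is strictly decreasing in $i$; hence $S(\C^*) \leq S(i_e^*, i_e^*)$ forces $\sum_j \binom{a_j}{2} \geq |\X|$. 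Combined with $\sum_j a_j = |\X|$ and $a_j \leq 3$, this can hold only when every $a_j$ is $0$ or $3$, and the sets $M_j$ with $a_j = 3$ then form an exact cover of $\X$.

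The main bookkeeping obstacle is verifying that the intricate algebraic expression for $k$ in the statement really coincides with $S(i_e^*, i_e^*)$; once this identification and the monotonicity of $S(i,i)$ are in hand, the combinatorial maximisation of $\sum_j \binom{a_j}{2}$ subject to $\sum_j a_j = |\X|$ and $a_j \leq 3$ follows immediately from convexity.
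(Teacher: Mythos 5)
Your proposal is correct and follows essentially the same route as the paper: the same clique-augmentation construction for the forward direction, and for the converse the same use of Lemmas~\ref{lem:clique_not_split}--\ref{lem:vertices_assigned} to force a clustering into cliques, followed by monotonicity of $i_e \mapsto \binom{p-i_e}{m-i_e}$ to force $i_e \geq i_e^*$. Your final step via convexity of $a \mapsto \binom{a}{2}$ subject to $\sum_j a_j = |\X|$, $a_j \leq 3$ is just a repackaging of the paper's ``each $v(x)$ has on average two but at most two co-clustered neighbours in $V_\X$'' argument, and your explicit ratio $S(i+1,i+1)/S(i,i) = (m-i)/(p-i)$ nicely justifies a monotonicity claim the paper only asserts.
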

\begin{proof}
  $\Rightarrow$: Let $R$ be a solution of $I$. 
  $R$ induces a clustering of $G$ in the following way: For each $M \in \Set \setminus R$ we introduce a cluster $C_M=C(M)$ and for each $M' \in R$ a cluster $C_{M'} = C(M') \cup \{v(x) \mid x \in M'\}$.
  As $R$ is an exact cover, this is a partition $\C$ of the vertex set.
  It is $p=\binom{|\Set| \cdot r^2 + |\X|}{2}$, $m=|\Set| \cdot \binom{r^2}{2} + 3 \cdot |\Set| \cdot r^2 + \binom{|\X}{2}$ and $i_p(\C)=i_e(\C)=|\Set| \cdot \binom{r^2}{2}  + |\X| \cdot r^2 + |\X|$.
  It can be easily verified that $S(\C) = k$.
  
  $\Leftarrow$: Let $\C$ be an optimal clustering of $G$ with respect to surprise and assume that $S(\C) \leq k$.
  From Lemma~\ref{lem:clique_not_split}, Lemma~\ref{lem:cliques_apart} and Lemma~\ref{lem:vertices_assigned}, we know that, for large $r$, we have one cluster for each set $M$ in $\Set$ that contains $C(M)$ and each vertex $v(x)$ in $V_\X$ shares a cluster with a clique $C(M)$ such that $x \in M$.
  In particular, all clusters in $\C$ are cliques and hence $\binom{i_p(\C)}{i_e(\C)} = 1$.
  It follows that $\binom{p}{m} \cdot k \geq \binom{p}{m} \cdot S(\C) =  \binom{p - i_e(\C)}{m - i_e(\C)}$.
  This term is strictly decreasing with $i_e(\C)$ and the above bound is tight for $i_e(\C) = |\Set| \cdot \binom{ \cdot r^2}{2} + |\X| \cdot r^2 + |\X|:= t$.
  Hence, $\C$ contains at least $t$ intracluster edges.
  The number of intracluster edges within $V_\Set$ is exactly $|\Set| \cdot \binom{r^2}{2}$ and the number of intracluster edges linking $V_\Set$ with $V_\X$ is exactly $|\X| \cdot r^2$.
  The only quantity we do not know is the number of intracluster edges within $V_\X$, which we denote by $i_e(V_\X)$.
  As $i_e(\C) \geq t$, it follows that $i_e(V_\X) \geq |\X|$.
  Thus, every vertex in $V_\X$ has in average two neighbors in $V_\X$ that are in the same cluster.
  On the other hand, vertices in $V_\X$ can only share a cluster if they are ``assigned'' to the same clique $C(M)$.
  As the sets in $\Set$ only contain three elements, vertices in $V_\X$ can only have at most two neighbors in $V_\X$ in their cluster.
  It follows that $\C$ partitions $V_\X$ into triangles.
  Hence, the set of subsets $R$ corresponding to cliques $C(M)$ whose clusters contain vertices in $V_\X$ form an exact cover of $\X$. \qed
\end{proof}

We now have a reduction from \textsc{X3S} to \textsc{SD} that works for all instances that are larger than a constant $c > 0$.
Hence, we get the following corollary.
\begin{corollary}
 \textsc{Surprise Decision} is $\mathcal{NP}$-complete.
\end{corollary}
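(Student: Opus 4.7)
The plan is to assemble the two halves of an $\mathcal{NP}$-completeness proof using the tools already in place.

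First, I would verify membership in $\mathcal{NP}$. A clustering $\C$ of $G$ is a partition of $V$, so it admits a polynomial-size encoding. Given such a certificate, the value $S(\C)$ is a sum of at most $m+1$ hypergeometric terms whose numerators and denominators are binomial coefficients over $p=\binom{n}{2}$; all of these have polynomial bit-length and can be evaluated in polynomial time by repeated multiplication, as already noted in the excerpt. Checking $S(\C)\leq k$ is therefore polynomial, which settles $\textsc{SD}\in\mathcal{NP}$.

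Second, for $\mathcal{NP}$-hardness I would invoke the reduction and the correctness theorem just proved. Given an \textsc{X3S} instance $I=(\X,\Set)$ with $|\X|\leq 3|\Set|$, constructing $G$ and the threshold $k$ is polynomial in $|I|$: the vertex count $|V(G)|=|\Set|\cdot r^2+|\X|$ grows polynomially in $|\Set|$, and the binomial coefficient appearing in $k$ has polynomial bit-length. The theorem asserts equivalence of yes-instances under this map whenever $r$ is ``large'', meaning $r\geq c$ for the constant $c$ implicit in the ``for large $r$'' clauses of Lemmas~\ref{lem:os_few_intra_non}--\ref{lem:vertices_assigned}.

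The one subtlety, and the only real piece of bookkeeping required for the corollary, is that the equivalence established by the theorem is conditional on $r\geq c$. I would handle this by pre-processing: for every input with $|\Set|<c/3$ I decide \textsc{X3S} directly by brute force (there are only finitely many such instances and each has constant size) and emit a fixed trivial yes- or no-instance of \textsc{SD} accordingly. All other inputs are handed to the reduction of the theorem. The composite map is polynomial-time computable and preserves yes/no answers on every input; combined with membership in $\mathcal{NP}$, this yields the corollary.
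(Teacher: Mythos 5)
Your proposal is correct and follows essentially the same route as the paper: membership in $\mathcal{NP}$ via polynomial evaluation of $S$, hardness via the \textsc{X3S} reduction, and the ``large $r$'' caveat dispatched by brute-forcing the finitely many small instances (exactly what the paper relegates to a footnote). No gaps.
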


To show that an optimal clustering with respect to surprise can be found in polynomial time if $G$ is a tree, we consider the following problem MACP~\cite{lt-tcamc-11}:
\begin{problem}[MACP]
Given a graph $G=(V,E)$ together with a weight function $w:V\rightarrow \mathbb{Q}_{\geq 0}$ on $V$ and a parameter $k$.
Find a clustering $\C$ of $G$ such that $m - i_e(\C)=k$ and $\sum_{C \in \C} {\bigl(\sum_{v \in C} w(v)\bigr)}^2$ is minimal.
\end{problem}
For the special case that $w(v)$ equals the degree of $v$ and $G$ is a tree, Dinh and Thai give a dynamic program that solves MACP for all $0 \leq k\leq m$ simultaneously~\cite{dt-tocdf-}.
This yields an $O(n^5)$ algorithm for modularity maximization in (unweighted) trees.
In the context of surprise, we are interested in the special case that $w(v)=1$ for all $v \in V$.
 The following conversion shows that this is equivalent to \mipShort~with respect to optimal solutions:
\begin{equation} \label{minIP_MACP}
i_p(\mathcal{C}) 
= \sum_{C \in \mathcal{C}} {\frac{\left|C\right|(\left|C\right| - 1)}{2}}
= \frac{1}{2} \sum_{C \in \mathcal{C}} {\left|C\right|^2} -
\underbrace{\frac{1}{2} \left|V\right|}_{=\mathrm{const.}}
\end{equation}

The dynamic program of Dinh and Thai has a straightforward generalization to general vertex weights, which is polynomial in the case that each vertex has weight $1$.
For completeness, App.~\ref{app:trees} contains a description of the dynamic program in this special case, together with a runtime analysis.
\begin{theorem}\label{theoremTrees}
Let $T = (V,E)$ with $n:=\left| V \right|$ be an unweighted tree. Then, a surprise optimal clustering of $T$ can be calculated in $O(n^5)$ time.
\end{theorem}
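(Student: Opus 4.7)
The plan is to leverage the bicriterial view from Section~\ref{sec:connection} together with a tree dynamic program for \mipShort. By Lemma~\ref{lem:basic_props}, every surprise-optimal clustering of $T$ is pareto optimal with respect to $(i_e, i_p)$, so it suffices to produce, for every target edge count $k \in \{0, 1, \ldots, m\}$, a clustering $\C_k$ minimizing $i_p$ subject to $i_e(\C_k) = k$ whenever such a clustering exists, then evaluate $S$ on each candidate and return the best. Since $m = n - 1$ on a tree, only $O(n)$ values of $k$ must be tried.

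For a fixed $k$, equation~\eqref{minIP_MACP} shows that minimizing $i_p$ is equivalent, up to an additive constant independent of $\C$, to minimizing $\sum_{C \in \C} |C|^2$; that is, \mipShort\ on $T$ coincides with MACP on $T$ with $w \equiv 1$ and parameter $m - k$. Following the blueprint of Dinh and Thai, I would root $T$ at an arbitrary vertex and build a dynamic programming table $D_v[j, s]$ storing the minimum of $\sum_C |C|^2$ over all clusterings of the subtree rooted at $v$ that use exactly $j$ intracluster edges and in which the cluster containing $v$ has exactly $s$ vertices inside that subtree. Leaves are initialized by $D_v[0, 1] = 1$, and at an internal vertex the children are incorporated one at a time: for each child $c$, either the tree edge $vc$ is cut, in which case the partial table at $v$ is combined additively with $\min_{s'} D_c[\cdot, s']$ at the appropriate edge index, or it is kept, which merges the cluster of $c$ of size $s'$ into that of $v$ of size $s$, increments the edge count by one, and updates the squared-size contribution from $s^2 + (s')^2$ to $(s + s')^2$. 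Once the root's table is filled, the minimum $i_p$ achievable with exactly $k$ intracluster edges follows from $D_{\text{root}}$ via~\eqref{minIP_MACP}, and the optimal clustering itself by the usual back-tracing of DP decisions.

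The main obstacle is a careful accounting to obtain the $O(n^5)$ bound. Each table has $O(n^2)$ entries, and the merge at an internal vertex is a convolution in both the $j$- and $s$-dimensions. The standard tree-DP charging argument, which assigns unit cost per pair of vertices lying in distinct child subtrees of their lowest common ancestor for each $j$-coordinate, yields $O(n^4)$ total merge work; combined with evaluating the hypergeometric expression for $S$ on each of the $O(n)$ candidate edge counts and reconstructing the corresponding clustering, this produces the claimed bound. The detailed DP description, correctness argument, and precise runtime analysis are deferred to App.~\ref{app:trees}.
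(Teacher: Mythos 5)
Your proposal matches the paper's proof in all essentials: the bicriterial reduction to solving \mipShort{} for each $k\in\{0,\dots,m\}$, the translation to MACP with unit weights via Eq.~\eqref{minIP_MACP}, and the Dinh--Thai-style tree DP indexed by edge count and the size of the component containing the subtree root, with the cut/keep case distinction and the squared-size cross term. The only (harmless) differences are that you index by intracluster edges rather than removed edges and that your charging argument gives a slightly sharper $O(n^4)$ bound on the total DP merge work, whereas the paper settles for the cruder per-entry estimate yielding $O(n^5)$; both fit the claimed bound.
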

\section{Exact Solutions}
\label{sec:algorithms} 
  In this section, we give an integer linear program for \mipShort~and discuss some variants of how to use this to get optimal clusterings with respect to surprise.
  
  \vspace{2ex}
  \noindent
  \textbf{Linear Program for \mipShort.} 
  The following ILP is very similar to a number of linear programs used for other objectives in the context of graph clustering and partitioning, in particular, to one used for modularity maximization~\cite{dt-tocdf-}.
  It uses a set of $\binom{n}{2}$ binary variables $\Xer{u}{v}$ corresponding to vertex pairs, with the interpretation that $\Xer{u}{v}=1$ iff $u$ and $v$ are in the same cluster.
  Let $\mathrm{Sep}(u,v)$ be a minimum $u$-$v$ vertex separator in $G$ if $\{u,v\} \notin E$ or in $G'=(V, E\setminus{\{u,v\}})$, otherwise.
  The objective is to
  \begin{align}
    \text{minimize }\sum_{\{u,v\} \in \binom{V}{2}} \Xer{u}{v} & \label{eq:basic_objective}
 \end{align}
 such that
 \begin{align}
 \renewcommand{\baselinestretch}{1.7}\normalsize
    \Xer{u}{v} \in \{0,1\}, &\quad \{u,v\} \in \binom{V}{2} \\
      \Xer{u}{w} + \Xer{w}{v} - \Xer{u}{v} \leq 1,  &\quad \{u,v\} \in \binom{V}{2}, w \in \mathrm{Sep}(u,v) \label{con:transitivity}\\
    \sum_{\{u,v\} \in E} \Xer{u}{v} = k &\quad \label{con:exactly_k}
 \end{align}
 Dinh and Thai consider the symmetric and reflexive relation induced by $\mathcal{X}$ and show that Constraint~(\ref{con:transitivity}) suffices to enforce transitivity in the context of modularity maximization~\cite{dt-tocdf-}.
 Their proof solely relies on the following argument.
 For an assignment of the variables $\Xer{u}{v}$ that does not violate any constraints, let us consider the graph $G'$ induced by the vertex pairs $\{u,v\}$ with $\Xer{u}{v}=1$.
 Now assume that there exists a connected component in $G'$ that can be partitioned into two subsets $A$ and $B$ such that there are no edges in the original graph $G$ between them.
 Setting $\Xer{a}{b} := 0$ for all $a \in A$, $b \in B$ never violates any constraints and strictly improves the objective function.
 It can be verified that this argument also works in our scenario.
 Hence, a solution of the above ILP induces an equivalence relation and therefore a partition of the vertex set.
 As $\mathrm{Sep}(u,v)$ is not larger than the minimum of the degrees of $u$ and $v$, we have $O(nm)$ constraints over $O(n^2)$ variables.
  
  \vspace{2ex}
  \noindent
  \textbf{Variants.} 
  We tested several variants of the approach described in Sect.~\ref{sec:introduction} to decrease the number of ILPs we have to solve.
  \begin{itemize}
   \item \emph{Exact(E)}: Solve $m$ times the above ILP and choose among the resulting clusterings the one optimizing surprise.
   \item \emph{Relaxed(R)}:   We relax Constraint~(\ref{con:exactly_k}), more specifically we replace it by
    \begin{equation}
    \sum_{\{u,v\} \in E} \Xer{u}{v} \geq k
    \label{con:at_least_k}
  \end{equation}
  Lemma~\ref{lem:basic_props}(i) tells us that the surprise of the resulting clustering is at least as good as the surprise of any clustering with exactly $k$ intracluster edges.
  Moreover, by Lemma~\ref{lem:basic_props}(ii), if $i_p$ is the value of a solution to the modified ILP, $S(i_p, k')$ is a valid lower bound for the surprise of any clustering with $k'\geq k$ intracluster edges.
  In order to profit from this, we consider all possible values for the number of intracluster edges in increasing order and only solve an ILP if the lower bound is better than the best solution found so far.
    \item \emph{Gap(G)}:    Similarly to the relaxed variant, we replace Constraint~(\ref{con:exactly_k}) by~(\ref{con:at_least_k}) and modify~(\ref{eq:basic_objective}) to
    \begin{equation}
    \text{minimize }\sum_{\{u,v\} \in \binom{V}{2}} \Xer{u}{v} - \sum_{\{u,v\} \in E} \Xer{u}{v}
    \label{eq:gap_objective}
    \end{equation}
    By Lemma~\ref{lem:basic_props}(ii), if $g$ is the objective value and $i_e$ the number of intracluster edges in a solution to the modified ILP, $S(k'+g, k')$ is a valid lower bound for the surprise of any clustering with $k'\geq k$ intracluster edges.
    Moreover, by Lemma~\ref{lem:basic_props}(iii), we know that $S(i_e+g, i_e)$ is not larger than the surprise of any clustering with exactly $k$ intracluster edges.
    Again, we consider all $k$ in increasing order and try to prune ILP computations with the lower bound.
  \end{itemize}

\vspace{2ex}
\noindent
\textbf{Case Study.} 
     \begin{table}[btp]
   \caption{Number of linear programs solved and running times in seconds of successive ILP approach, different strategies.}
   \label{tab:running_times}
   \begin{center}
   \begin{footnotesize}
\begin{tabular}{|l|cc|cc|cc|cc|} 
\hline \multicolumn{1}{|c|}{} & \multicolumn{2}{c|}{\texttt{karate}} & \multicolumn{2}{c|}{\texttt{lesmis}} & \multicolumn{2}{c|}{\texttt{grid6}} & \multicolumn{2}{c|}{\texttt{dolphins}} \\ 
variant & ILP & t(s) & ILP & t(s) & ILP & t(s) & ILP & t(s) \\ 
\hline Exact & 79 & 51 & 255 & 1192 & 61 & 470 & 160 & 494 \\ 
Relaxed & 49 & 21 & 176 & 282 & 42 & 449 & 107 & 163 \\
Gap & 39 & \textbf{15} & 112 & \textbf{205} & 37 & \textbf{401} & 91 & \textbf{147} \\ 
\hline \end{tabular} 
  \end{footnotesize}
  \end{center}
  \end{table}
  Table~\ref{tab:running_times}~shows an overview of running times and the number of solved ILPs of the different strategies on some small instances.
  \texttt{karate}($n=34, m=78$), \texttt{dolphins}($n=62,m=159$) and \texttt{lesmis}($n=77,m=254$) are real world networks from the website of the 10th DIMACS implementation Challenge\footnote{\url{http://www.cc.gatech.edu/dimacs10/}} that have been previously used to evaluate and compare clusterings, whereas 
  \texttt{grid6}($n=36, m=60$) is a 2 dimensional grid graph.  
  We used the C++-interface of \texttt{gurobi5.1}~\cite{gurobi} and computed the surprise of the resulting clusterings with the help of the GNU Multiple Precision Arithmetic Library, in order to guarantee optimality.
  The tests were executed on one core of an AMD Opteron Processor 2218.
  The machine is clocked at 2.1 GHz and has 16 GB of RAM.
  Running times are averaged over $5$ runs.

It can be seen that the gap variant, and, to a smaller extent, the relaxed variant, are able to prune a large percentage of ILP computations and thus lead to less overall running time.
These running times can be slightly improved by using some heuristic modifications described and evaluated in App.~\ref{app:heuristics}.
%

\vspace{2ex}
\noindent
\textbf{Properties of optimal clusterings.}
\begin{figure}[tbp]
 \begin{center}
  \subfigure[\texttt{karate}]{\includegraphics[scale=0.2]{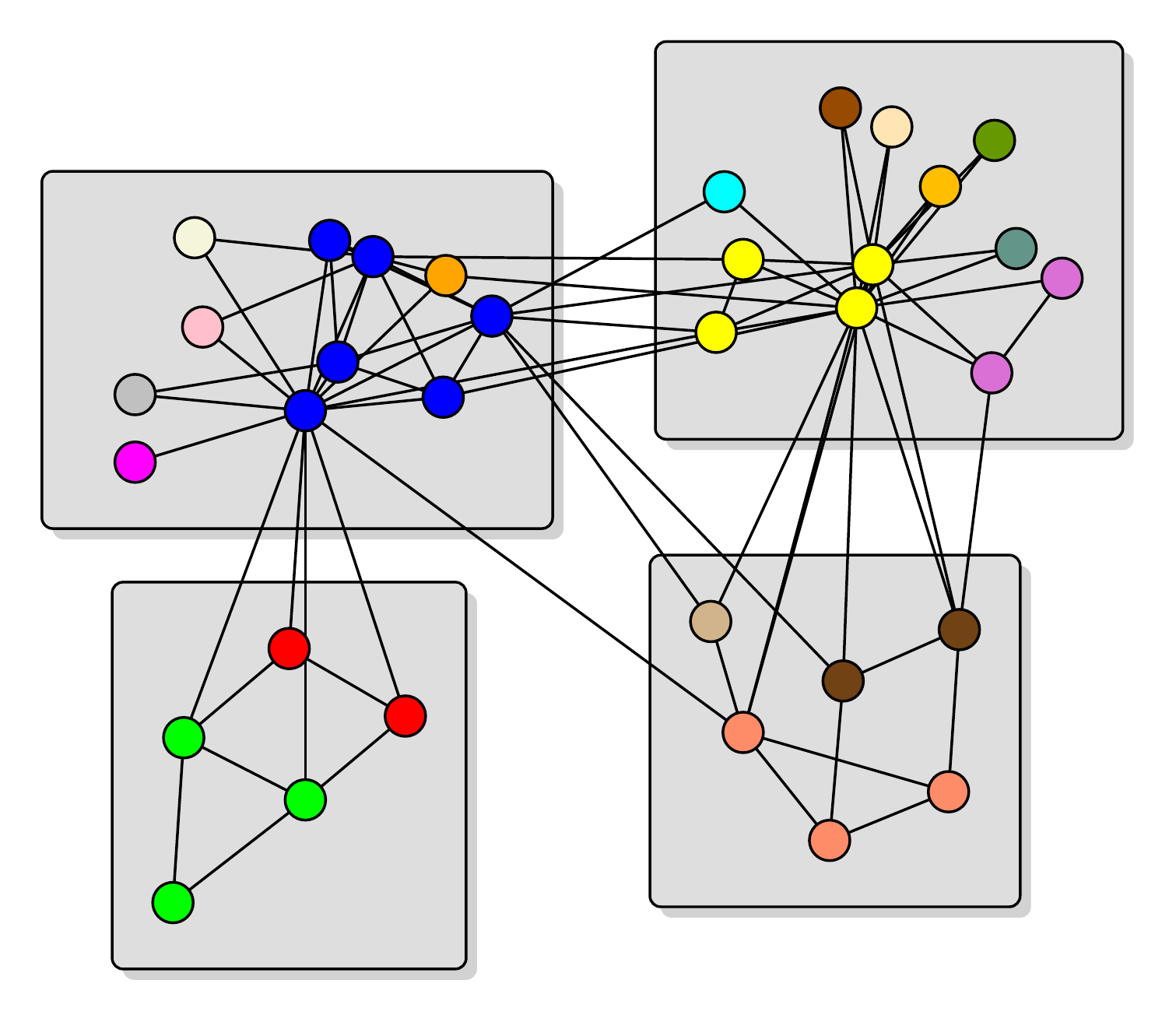}}
  \subfigure[\texttt{dolphins}]{\includegraphics[scale=0.2]{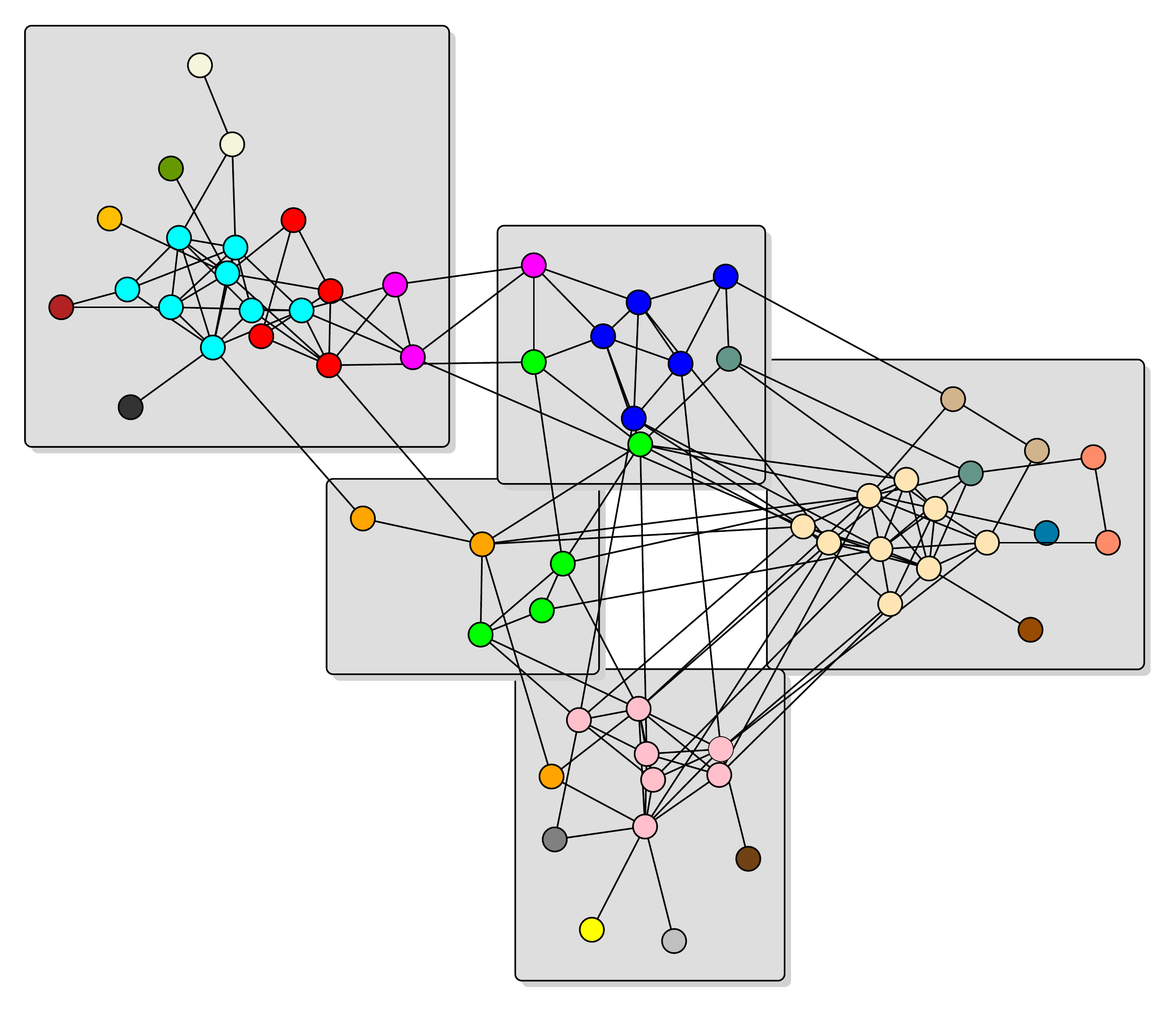}}
  \subfigure[\texttt{grid6}]{\includegraphics[scale=0.2]{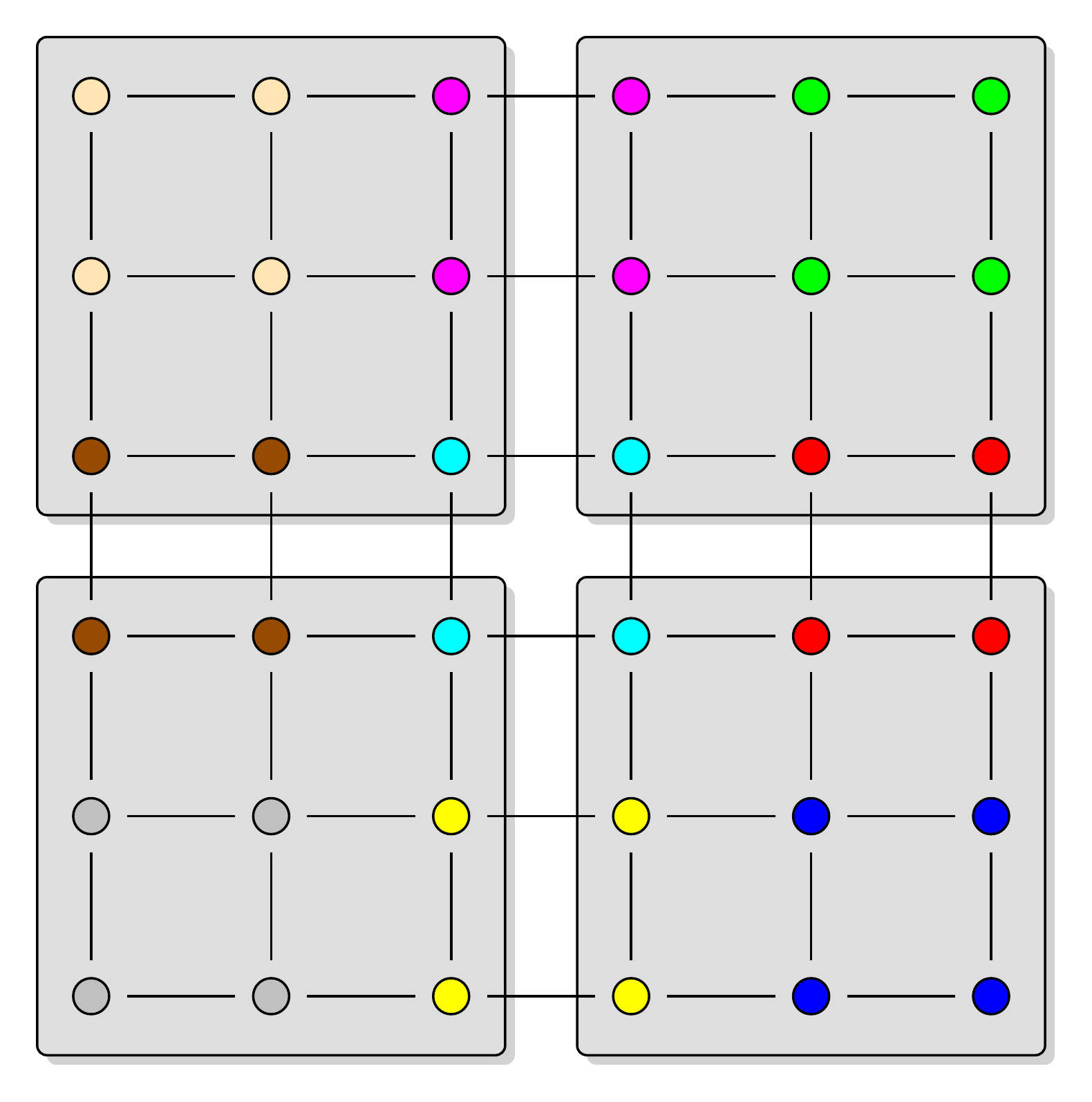}}
  \subfigure[\texttt{lesmis}]{\includegraphics[scale=0.35]{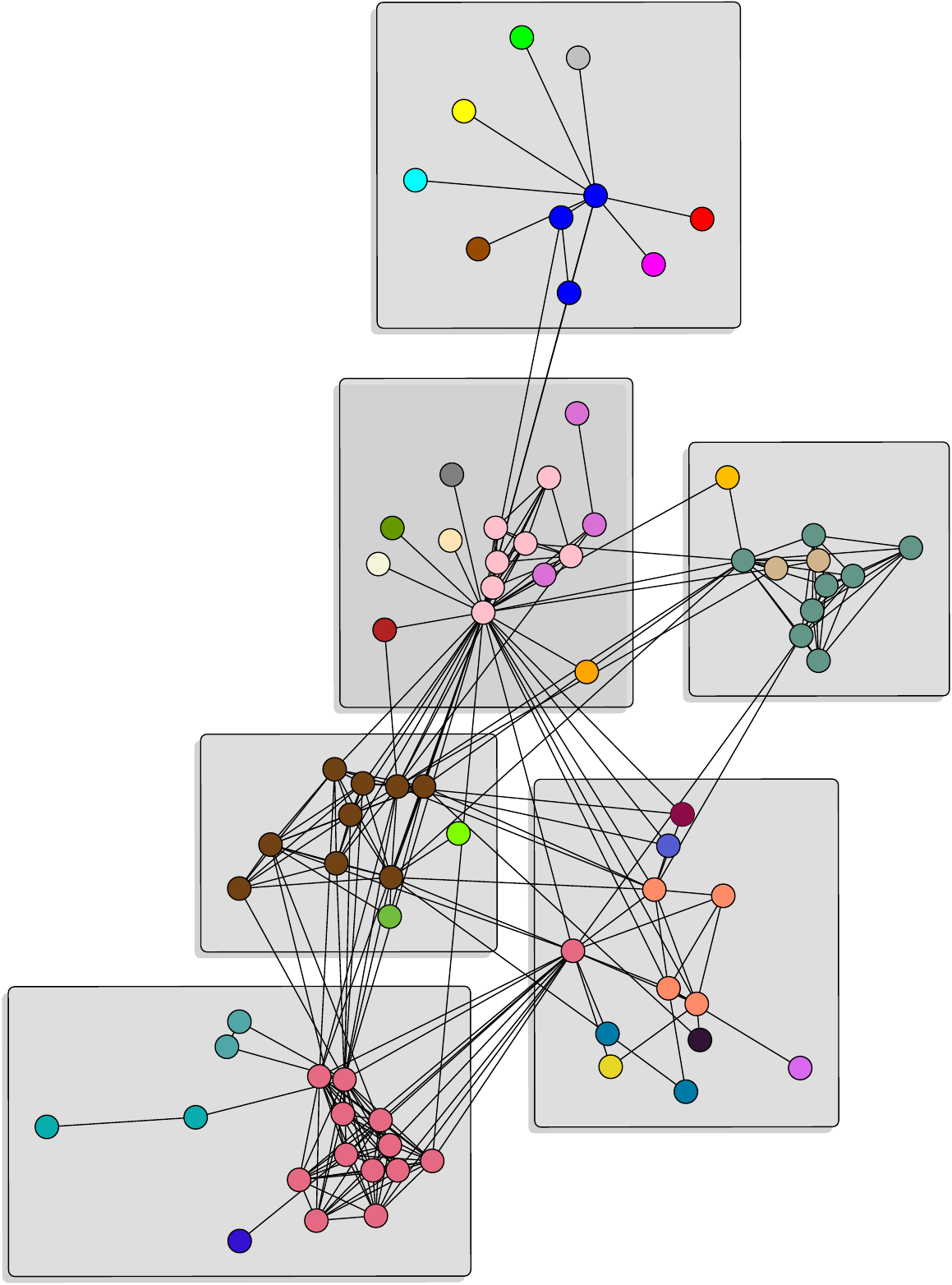}}
    \subfigure[\texttt{football}]{\includegraphics[scale=0.75]{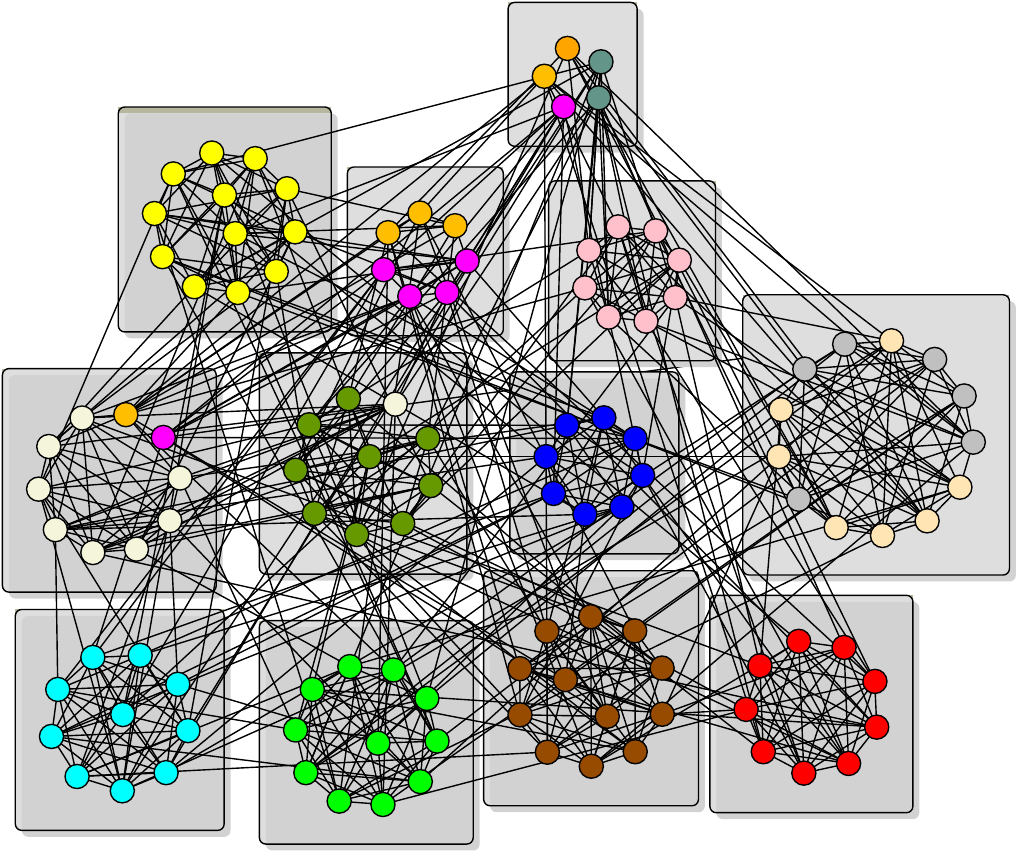}}
  \caption{Optimal clusterings with respect to surprise(colors) and, for (a) to (d), modularity(grouping). The grouping in (e) represents the \emph{ground-truth clustering}, i.e.~the mapping of teams to conferences.}
  \label{fig:opt_sol}
 \end{center}
\end{figure}
\begin{table}[p]
\caption{Properties of optimal clusterings with respect to surprise. $S'$ denotes the surprise as defined by Aldecoa and Mar{\'\i}n~\cite{am-dncss-11}, i.e. $S'(\C) = -\log_{10} S(\C)$. $S_o$ denotes the clustering with optimum surprise, $S_h$ the heuristically found clusterings from~\cite{am-dncss-11}, if this information was available, and $M_o$ the modularity optimal clustering.}
\label{tab:prop_opt}
\nprounddigits{2}
\begin{center}
\begin{tabular}{|l|r|r|l|r|r|r|r|r|}
\hline instance & $i_e$ & $i_p$ & $S(S_o)$ & $S'(S_o)$ &$S'(S_h)$  & $|S_o|$ & $|S_h|$ & $|M_o|$\\
\hline
\texttt{karate} & $29$ & $30$ & $\numprint{2.02474e-26}$ & $25.69$ & $25.69$ & $19$ & $19$ & $4$\\
\texttt{grid6} & $36$ & $54$ & $\numprint{2.89981e-29}$ & $28.54$ & - & $9$ & - & $4$ \\
\texttt{dolphins} & $87$ & $121$ & $\numprint{9.93152e-77}$ & $76.00$ & - & $22$ & - & $5$\\
\texttt{lesmis} & $165$ & $179$ & $\numprint{1.5385e-184}$ & $183.81$ & -  &$33$ & - & $6$\\
\texttt{football} &  $399$ & $458$ & $\numprint{5.64724e-407}$ &  $\numprint{406.248}$& -& $15$ & $15$ & $10$ \\ \hline
\end{tabular}
\end{center}
\end{table}
Fig.~\ref{fig:opt_sol} illustrates optimal clusterings with respect to surprise and modularity on the test instances, Table~\ref{tab:prop_opt} summarizes some of their properties.
We also included one slightly larger graph, \texttt{football}($n=115,m=613$), as it has a known, well-motivated \emph{ground truth clustering} and has been evaluated in~\cite{am-dncss-11}.
The surprise based clusterings contain significantly more and smaller clusters than the modularity based ones, being \emph{refinements} of the latter in the case of \texttt{karate} and \texttt{lesmis}.
Another striking observation is that the surprise based clusterings contain far more \emph{singletons}, i.e. clusters containing only one vertex with usually low degree; this can be explained by the fact that surprise does not take vertex degrees into account and hence, merging low degree vertices into larger clusters causes larger penalties.
It reconstructs the ground-truth clustering of the \texttt{football} graph quite well.
This confirms the observations of Aldecoa and Mar{\'\i}n based on heuristically found clusterings~\cite{am-dncss-11}; in fact, we can show that for \texttt{karate}, this clustering was already optimal.
\section{Conclusion}
\label{sec:conclusion}
We showed that the problem of finding a clustering of a graph that is optimal with respect to the measure surprise is $\mathcal{NP}$-hard.
The observation that surprise is pareto optimal with respect to (maximizing) the number of edges and (minimizing) the number of vertex pairs within clusters yields a (polynomial time) dynamic program on trees.
Furthermore, it helps to find exact solutions in small, general graphs via a sequence of ILP computations.
The latter can be used to gain insights into the behavior of surprise, independent of any artifacts stemming from a particular heuristic.
Moreover, optimal solutions are helpful to assess and validate the outcome of heuristics.

\bibliographystyle{abbrv}
\bibliography{myRefs}
  
 \clearpage 
\appendix

\section{Proof of Lemma~\ref{lem:os_few_intra_non}}
\label{app:asymptotics}
The proof of Lemma~\ref{lem:os_few_intra_non} is based on the following two observations on the asymptotic behavior of binomial coefficients.
 \begin{lemma}
  Let $f : \mathbb{N} \rightarrow \mathbb{N}$ and $g : \mathbb{N} \rightarrow \mathbb{N}$ be two functions such that $g(n) \in o\left(f(n)\right)$. Then,
  $$\binom{f(n)}{g(n)} \in \Omega\left( \frac{f(n)^{g(n)}}{g(n)^{g(n) + 1/2}}\right)$$
  \label{lem:asymptotics}
 \end{lemma}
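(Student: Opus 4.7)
The plan is to combine the standard factorization $\binom{f(n)}{g(n)} = \frac{f(n)(f(n)-1)\cdots(f(n)-g(n)+1)}{g(n)!}$ with Stirling's upper bound on the factorial in the denominator, and then exploit the fact that the assumption $g(n) \in o(f(n))$ controls how much the falling factorial in the numerator differs from $f(n)^{g(n)}$.

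First I would lower-bound the numerator by noting that each of the $g(n)$ falling factors is at least $f(n) - g(n) + 1$, so the numerator is at least $(f(n) - g(n) + 1)^{g(n)}$. Because $g(n) \in o(f(n))$, the ratio $g(n)/f(n)$ tends to $0$, so for all sufficiently large $n$ we have $f(n) - g(n) + 1 \geq f(n)/2$. This gives a lower bound of $(f(n)/2)^{g(n)}$ on the numerator.

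Next I would invoke Stirling's formula in the (purely finite) form $g! \leq C \cdot g^{g+1/2}\, e^{-g}$, valid for all $g \geq 1$ and some absolute constant $C$. Combining the two estimates yields
\[
\binom{f(n)}{g(n)} \;\geq\; \frac{(f(n)/2)^{g(n)}\, e^{g(n)}}{C \cdot g(n)^{g(n) + 1/2}} \;=\; \frac{1}{C} \left(\frac{e}{2}\right)^{g(n)} \cdot \frac{f(n)^{g(n)}}{g(n)^{g(n) + 1/2}}.
\]
Since $e/2 > 1$, the factor $(e/2)^{g(n)}$ is bounded below by $1$ for every $g(n) \geq 0$, and the desired $\Omega$-bound holds with multiplicative constant $1/C$.

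There is no real algebraic obstacle here; the one thing I would be mildly careful about is that the bound $f - g + 1 \geq f/2$ is much more generous than it needs to be (one could take any constant in $(1/e, 1)$), but taking $1/2$ cleanly separates the dependence on $n$ from the Stirling constants and avoids any delicate balancing between the $(1 - (g-1)/f)^g$ and $e^g$ factors. The only other edge case is when $g(n)$ fails to grow (e.g.\ is identically zero), in which case the right-hand side is degenerate; this is handled by reading $\Omega(\cdot)$ as an eventually-statement, and in the paper's intended application $g(n)$ will grow polynomially in $r$, so the regime of interest is exactly the one the above argument covers.
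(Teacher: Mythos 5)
Your proof is correct, but it takes a different decomposition than the paper's. The paper writes $\binom{f(n)}{g(n)} = \frac{f(n)!}{g(n)!\,[f(n)-g(n)]!}$ and applies Stirling's two-sided bound to all three factorials; the exponential factors $e^{-f}, e^{-g}, e^{-(f-g)}$ then cancel exactly, the hypothesis $g \in o(f)$ is used to show $\sqrt{f/(g(f-g))} \in \Theta(1/\sqrt{g})$, and the final step is the inequality $f^{f}/(f-g)^{f-g} = f^{g}\,\bigl(f/(f-g)\bigr)^{f-g} \geq f^{g}$. You instead use the falling-factorial form $\binom{f}{g} = \frac{f(f-1)\cdots(f-g+1)}{g!}$, apply Stirling only to $g!$, and use $g \in o(f)$ to get $f-g+1 \geq f/2$, after which the surplus factor $(e/2)^{g} \geq 1$ absorbs the lost constant. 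Your route is slightly more elementary (one Stirling application instead of three, no asymptotic manipulation of the square-root term) and makes the role of the hypothesis $g \in o(f)$ more transparent; the paper's route avoids your explicit balancing of the constant $1/2$ against $e^{g}$ because the exponentials cancel for free. Both correctly set aside the degenerate case $g(n)=0$, where the right-hand side of the statement is ill-formed anyway, and both only need the bound for large $n$, which is all that $\Omega$ requires. One small point of care in your version: the Stirling upper bound $g! \leq C\,g^{g+1/2}e^{-g}$ needs $g \geq 1$, so your argument, like the paper's, implicitly restricts to the regime where $g(n) \geq 1$, which is the only regime in which the lemma is applied.
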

\begin{proof}
 For $n > 0$, Stirling's formula yields
 $$  \sqrt{2 \pi} \cdot n^{n + 1/2} \cdot e^{-n} \leq n! \leq e \cdot \sqrt{2\pi} \cdot n^{n + 1/2} \cdot e^{-n}$$
 Hence, it is
 \begin{align*}
  \binom{f(n)}{g(n)}&= \frac{f(n)!}{g(n)! \cdot [f(n) - g(n)]!}\\
		    &\geq \frac{\sqrt{2 \pi} \cdot {f(n)}^{f(n)} \cdot \frac{\sqrt{f(n)}}{e^{f(n)}}}{(e \cdot \sqrt{2 \pi})^2 \cdot {g(n)}^{g(n)} \cdot \frac{\sqrt{g(n)}}{e^{g(n)}} \cdot  {[f(n) - g(n)]}^{f(n) - g(n)} \cdot \frac{\sqrt{f(n) - g(n)}}{e^{f(n) - g(n)}}}\\
		    &= \frac{1}{e^2 \cdot \sqrt{2\pi}} \cdot \frac{ {f(n)}^{f(n)} \cdot \sqrt{f(n)}}{ {g(n)}^{g(n)} \cdot \sqrt{g(n)} \cdot  {[f(n) - g(n)]}^{f(n) - g(n)} \cdot \sqrt{f(n) - g(n)}}\\
 \end{align*}
As $f(n)$ grows faster than $g(n)$, 
$$\sqrt{\frac{f(n)}{g(n) \cdot (f(n) - g(n))}} \in \Theta\left(\frac{1}{\sqrt{g(n)}}\right)$$
and thus,
 \begin{align*}
  \binom{f(n)}{g(n)}&\in \Theta \left(  \frac{1}{e^2 \cdot \sqrt{2\pi}} \cdot \frac{ {f(n)}^{f(n)} }{ {g(n)}^{g(n)} \cdot \sqrt{g(n)} \cdot  {[f(n) - g(n)]}^{f(n) - g(n)} }\right)
 \end{align*}
 It is $f(n)-g(n) \leq f(n)$ and hence,
  \begin{align*}
  \binom{f(n)}{g(n)}&\subseteq \Omega\left(\frac{{f(n)}^{g(n)}}{\sqrt{g(n)} \cdot {g(n)}^{g(n)}}\right) = \Omega\left( \frac{{f(n)}^{g(n)}}{{g(n)}^{g(n) + 1/2}}\right)
 \end{align*} \qed
\end{proof}
\begin{lemma}
 \label{lem:As_Pol}
 Let $u_1, u_2, k_1, k_2 \in \mathbb{N}$ with $u_1 > k_1$, $u_2 > k_2$ and $k_1 > k_2$.
 Furthermore, let $f_1 : \mathbb{N}\rightarrow\mathbb{N}$, $f_2 : \mathbb{N}\rightarrow\mathbb{N}$, $g_1 : \mathbb{N}\rightarrow\mathbb{N}$ and $g_2 : \mathbb{N}\rightarrow\mathbb{N}$ be functions with $f_1(n) \in \Theta(n^{u_1})$, $f_2(n) \in \Theta(n^{u_2})$, $g_1(n) \in \Theta(n^{k_1})$ and $g_2(n) \in \Theta(n^{k_2})$.
 Then,
 $$\binom{f_2(n)}{g_2(n)} \in o\left(\binom{f_1(n)}{g_1(n)}\right)$$
\end{lemma}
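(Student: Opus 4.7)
The plan is to work with the logarithm of the ratio
$$R(n) := \frac{\binom{f_2(n)}{g_2(n)}}{\binom{f_1(n)}{g_1(n)}}$$
and show $\log R(n) \to -\infty$. For the denominator I would apply Lemma~\ref{lem:asymptotics} directly, which is applicable because $u_1 > k_1$ guarantees $g_1(n) \in o(f_1(n))$. For the numerator I would use the elementary upper bound $\binom{f_2(n)}{g_2(n)} \leq f_2(n)^{g_2(n)}$.

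Combining these bounds, there exists a constant $c>0$ such that for large $n$,
$$\log R(n) \le g_2(n) \log f_2(n) - g_1(n)\log f_1(n) + \bigl(g_1(n)+\tfrac12\bigr) \log g_1(n) + c.$$
Substituting $f_i(n) = \Theta(n^{u_i})$ and $g_i(n) = \Theta(n^{k_i})$ turns the right-hand side, up to multiplicative constants in the leading coefficients, into a linear combination of terms of the form $n^{k_i} \log n$ (and lower-order contributions from the $\log g_1(n)$ factor). More precisely, the dominant terms are $u_2 \, n^{k_2} \log n$ from the numerator and $-(u_1 - k_1) \, n^{k_1} \log n$ from the denominator, with all other summands being $O(n^{\max(k_1,k_2)} \log\log n)$ or smaller.

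The key observation is then that $k_1 > k_2$ makes $n^{k_1}\log n$ strictly dominate $n^{k_2}\log n$, while the hypothesis $u_1 > k_1$ ensures that the coefficient $-(u_1-k_1)$ of the dominant term is strictly negative. Hence $\log R(n) \to -\infty$ and therefore $R(n) \to 0$, which is exactly the claim $\binom{f_2(n)}{g_2(n)} \in o\bigl(\binom{f_1(n)}{g_1(n)}\bigr)$.

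The only real obstacle is bookkeeping: the $\Theta$-hidden constants in $f_1,f_2,g_1,g_2$ are raised to powers that themselves grow polynomially in $n$, so I need to track that these constants contribute only $O(n^{k_1})$ additively to the log (since $\log c^{g_i(n)} = g_i(n)\log c = O(n^{k_i})$), i.e.\ without the extra $\log n$ factor. That is precisely why the $n^{k_1}\log n$ term still wins. Once this is laid out cleanly, the rest is a routine comparison of leading orders.
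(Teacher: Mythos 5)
Your proposal is correct and follows essentially the same route as the paper: lower-bound $\binom{f_1(n)}{g_1(n)}$ via Lemma~\ref{lem:asymptotics}, upper-bound $\binom{f_2(n)}{g_2(n)}$ by $f_2(n)^{g_2(n)}$, and compare logarithms, where the leading term $-(u_1-k_1)\,n^{k_1}\log n$ (negative by $u_1>k_1$) dominates the $\Theta(n^{k_2}\log n)$ contribution because $k_1>k_2$. The only cosmetic difference is that you bound the ratio directly instead of exhibiting explicit functions $l_1, l_2$ with $l_2 \in o(l_1)$, and your parenthetical $\log\log n$ remark is unnecessary (the constant contributions are only $O(n^{k_1})$, as you note at the end), but neither affects correctness.
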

\begin{proof}
 From Lemma~\ref{lem:asymptotics}, it follows that
 $$\binom{f_1(n)}{g_1(n)} \in \Omega\left(\frac{f_1(n)^{g_1(n)}}{\sqrt{g_1(n)} \cdot {g_1(n)}^{g_1(n)}}\right)$$
 Furthermore, for large $n$ there exist constants $a_1, b_1, b_2 > 0$ such that
  \begin{itemize}
  \item $b_1 \cdot n^{k_1} \leq g_1(n) \leq b_2 \cdot n^{k_1} $
  \item $a_1 \cdot n^{u_1} \leq f_1(n)$
 \end{itemize}
 and hence, as $f_1(n)/g_1(n) \geq 1$ for large $n$,
 \begin{align*}
  \frac{f_1(n)^{g_1(n)}}{\sqrt{g_1(n)} \cdot {g_1(n)}^{g_1(n)}} &= \frac{1}{\sqrt{g_1(n)}} \cdot {\left(\frac{f_1(n)}{g_1(n)}\right)}^{g_1(n)} \geq \frac{1}{\sqrt{g_1(n)}} \cdot {\left(\frac{f_1(n)}{g_1(n)}\right)}^{b_1 \cdot n^{k_1}}\\
                                                                &\geq \frac{1}{\sqrt{b_2} \cdot n^{1/2 k_1}} \cdot \frac{{\left(a_1 \cdot n^{u_1}\right)}^{b_1 \cdot n^{k_1}}}{{\left(b_2 \cdot n^{k_1}\right)}^{b_1 \cdot n^{k_1}}}
 \end{align*}
From this, it follows that
$$\binom{f_1(n)}{g_1(n)} \in \Omega\left(\frac{{a_1}^{b_1 \cdot n^{k_1}} \cdot n^{b_1 \cdot u_1 \cdot n^{k_1}}}{n^{1/2 k_1} \cdot {b_2}^{b_1 \cdot n^{k_1}} \cdot n^{b_1 \cdot k_1 \cdot n^{k_1}}} =: l_1(n)\right)$$
On the other hand there exist constants $a_2, b_3 > 0$ such that for large $r$
\begin{itemize}
 \item $f_2(n) \leq a_2 \cdot n^{u_2}$
 \item $g_2(n) \leq b_3 \cdot n^{k_2}$
\end{itemize}
and hence,
$$\binom{f_2(n)}{g_2(n)} \leq {f_2(n)}^{g_2(n)} \leq (a_2 \cdot n^{u_2})^{b_3 \cdot n^{k_2}} = {a_2}^{b_3 \cdot n^{k_2}} \cdot n^{b_3 \cdot u_2 \cdot n^{k_2}} =: l_2(n)$$
It remains to show that $l_2(n) \in o(l_1(n))$. 
To see that this is the case, we look at the logarithm:
\begin{align*}
  \log(l_1(n)) &= b_1 \cdot n^{k_1} \cdot \log(a_1) + b_1 \cdot u_1 \cdot n^{k_1} \cdot \log(n) - \frac{1}{2} \cdot k_1 \cdot \log(n) \\
  &~~~~~ - b_1 \cdot n^{k_1} \cdot \log(b_2) - b_1 \cdot k_1 \cdot n^{k_1} \cdot \log(n)\\
              &= b_1\cdot\underbrace{(u_1 - k_1)}_{>0} \cdot n^{k_1} \cdot \log(n) + b_1 \cdot(\log(a_1) - \log(b_2)) \cdot n^{k_1}\\
              &~~~~~ - \frac{1}{2} \cdot k_1 \cdot \log(n) 
\end{align*}
Hence, $\log(l_1(n)) \in \Theta(n^{k_1} \cdot \log(n))$.
On the other hand,
$$\log(l_2(n)) = b_3 \cdot n^{k_2} \cdot \log(a_2) + b_3 \cdot u_2 \cdot n^{k_2} \cdot \log(n) \in \Theta(n^{k_2} \cdot \log(n))$$
Thus, $l_2(n) \in o(l_1(n))$. \qed

\end{proof}
We are now ready to proof Lemma~\ref{lem:os_few_intra_non}.
\begin{proof}[of Lemma~\ref{lem:os_few_intra_non}]
 Assume that $\C$ is an optimal clustering with respect to surprise and $i_p(\C) - i_e(\C) >  \frac{r^4}{2}$.
 We will compare $S(\C)$ to the value of the clustering $\C'$ used in the proof of Lemma~\ref{lem:os_many_ie}.
 $\C'$ is a clustering into cliques with $|\Set| \cdot \binom{r^2}{2}$ intracluster edges.
 Hence, $i_p(\C')=i_e(\C')$ and thus
 \begin{align*}
  \binom{p}{m} \cdot S(\C')&=\binom{i_p(\C')}{i_e(\C')} \cdot \binom{p-i_e(\C')}{m-i_e(\C')} = \binom{\binom{|\Set| \cdot r^2 + |\X|}{2} - |\Set| \cdot \binom{r^2}{2} := f_2(r)}{|\X| \cdot r^2 + \binom{|\X|}{2} := g_1(r)}
 \end{align*}
 with $f_2 \in \Theta(r^6)$ and $g_2 \in \Theta(r^3)$.
 
  \emph{Case 1: $i_e(\C) \leq i_p(\C)/2$}: As $i_e(\C) \leq i_p(\C)/2$, substituting a lower bound for $i_e(\C)$ decreases $\binom{i_p(\C)}{i_e(\C}$.
  From Lemma~\ref{lem:os_many_ie}, we know that $i_e(\C) \geq |\Set| \cdot \binom{r^2}{2}$, which can be estimated from below by $r^4$ for large $r$.
  Altogether, we get that 
    $$\binom{p}{m} \cdot S(\C) =\!\! \sum_{i=i_e(\C)}^{m} \binom{i_p(\C)}{i} \cdot \binom{p-i_p(\C)}{m-i} \!\geq \!\binom{i_p(\C)}{i_e(\C)} \geq \binom{2 \cdot |\Set| \cdot \binom{r^2}{2} := f_1(r)}{r^4 := g_1(r)}$$
  with $f_1 \in \Theta(r^5)$ and $g_1 \in \Theta(r^4)$.
  Now we can use Lemma~\ref{lem:As_Pol} to see that, for large $r$, $S(\C)$ is larger than $S(\C')$, which contradicts the optimality of $\C$.
  
  \emph{Case 2: $i_e(\C) > i_p(\C)/2$}: We have $\binom{i_p(\C)}{i_e(\C)} = \binom{i_p(\C)}{i_p(\C) - i_e(\C)}$.
  As $i_e(\C) > i_p(\C)/2$, $i_p(\C) - i_e(\C) < i_p(\C)/2$ and substituting a lower bound for $i_p(\C)-i_e(\C)$ decreases $\binom{i_p(\C)}{i_p(\C) -i_e(\C)}$.
  Hence, by Lemma~\ref{lem:os_few_intra_non},
  $$\binom{p}{m} \cdot S(\C) \geq \binom{i_p(\C)}{i_e(\C)} = \binom{i_p(\C)}{i_p(\C) - i_e(\C)} \geq \binom{i_p(\C)}{r^4/2} \geq \binom{|\Set| \cdot \binom{r^2}{2} := f_1(r)}{r^4/2 := g_1(r)}$$
  with $f_1 \in \Theta(r^5)$ and $g_1 \in \Theta(r^4)$.
  Analogously to Case 1, it follows that $\C$ was not optimal. \qed
 \end{proof}

\section{Proof of Theorem 14}
\label{app:trees}
In this section, we describe a straigtforward generalization of the dynamic program of Dinh and Thai~\cite{dt-tocdf-} to arbitrary vertex weights and explain in detail, how this can be used to solve surprise minimization in trees in polynomial time.

Let $T = (V,E)$ be a (rooted) tree with root $r$ together with a function $w:V\rightarrow \mathbb{Q}_{\geq 0}$ which assigns a weight to each vertex of the tree. 
In a natural way, $w(C) = \sum_{v \in C}{w(v)}$ denotes the weight of a subset $C \subseteq V$ of vertices.
$n:= \left|V\right|$ denotes the number of vertices in $T$ and  $m:= \left|E\right| = n-1$ the number of edges.
$T^u = (V^u, E^u)$ is the subtree of $T$ which is rooted at node $u$ and $u_1, \ldots, u_{t(u)}$ are the children of node $u$.
For $i=0,\ldots,t(u)$, let $T^u_{i}$ be the partial subtree of $T$ rooted at node $u$ and consisting of $u, T^{u_1}, \ldots, T^{u_i}$.
Fig.~\ref{partialTree} illustrates the concept of partial subtrees.\\
\begin{figure}[tbp]
\centering
\includegraphics[width=0.5\textwidth]{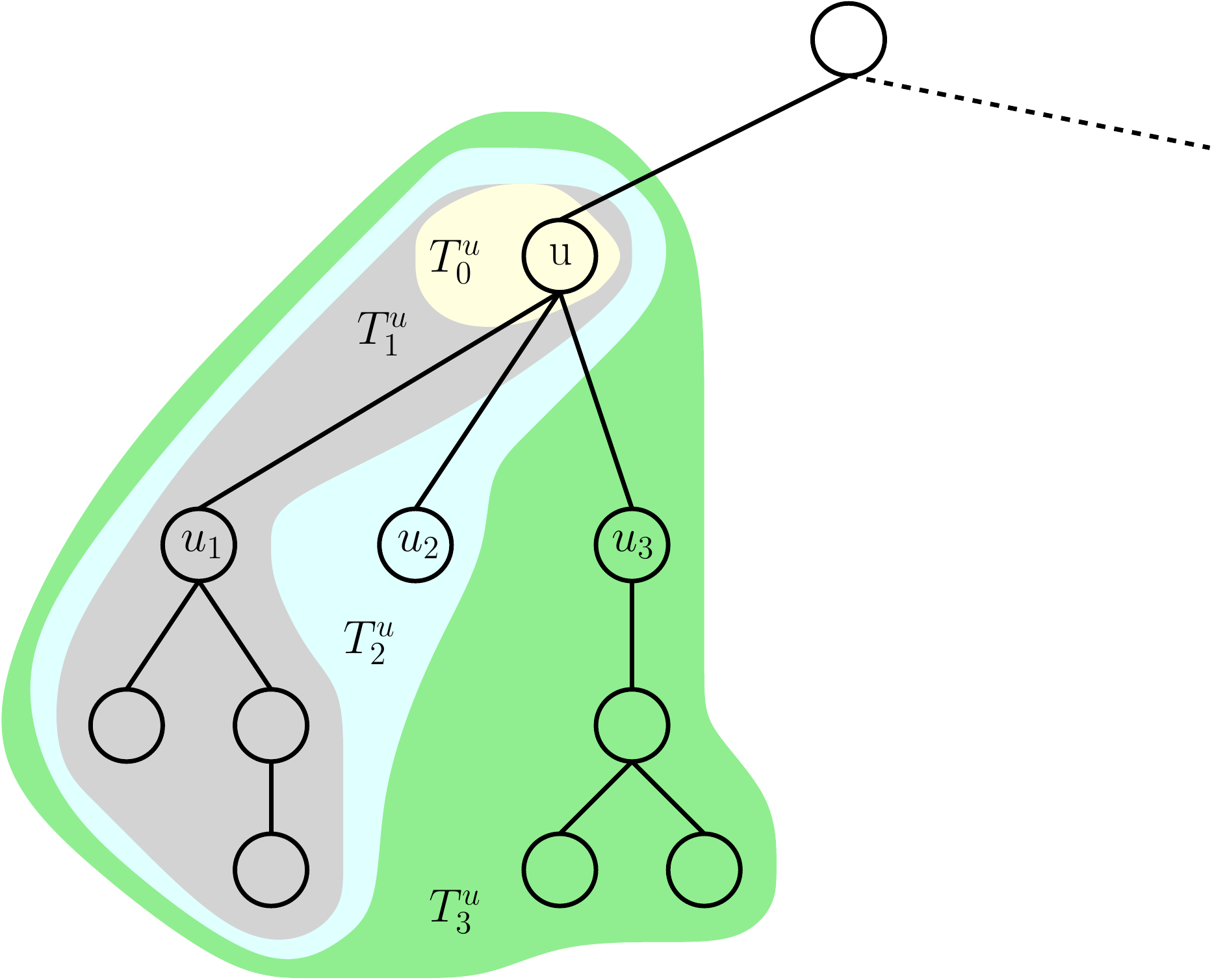}
\caption{partial subtrees of a node u.}
\label{partialTree}
\end{figure}

An important observation is that in an optimal solution of the MACP problem, all clusters are connected. Hence, in a clustering $\mathcal{C}$ of a tree $T$ with $c$ clusters there are exactly $c-1$ intercluster and $m-c+1$ intracluster edges. According to this observation, the following functions are the core of the dynamic program.
\begin{itemize}
	\item $F^{u}(k)$: The minimum of the sum-of-squares of component-weights in the subtree $T^u$ when $k$ edges are removed from $T^u$.
	\item $F^{u}(k, \nu)$: The minimum of the sum-of-squares of component-weights in the subtree $T^u$ when $k$ edges are removed from $T^u$ while the component which contains $u$ has weight $\nu$.
	\item $F^{u}_i(k, \nu)$: The minimum of the sum-of-squares of component-weights in the partial subtree $T^u_{i}$ when $k$ edges are removed in $T^u_i$ and the component that contains $u$ has weight $\nu$.
\end{itemize}

It is easy to see that these functions are related as follows:
\begin{align*}
	F^u(k, \nu) &= F^u_{t(u)} (k, \nu) \\
	F^u(k) &= \min_{w(u) \leq \nu \leq w(T^u)} F^u(k, \nu)
\end{align*}
Therefore it is only required to have a look at the calculation of $F^u_i$. The basic cases are the following:

\begin{align}
\label{k0}
F^{u}_{i}(0,\nu) &= \Bigg\{\begin{array}{ll}
						w(T^{u}_i)^2, & \text{for } \nu = w(T^{u}_i) \\ 					\infty, & \text{otherwise }
					\end{array}\\
\label{i0}
F^{u}_{0}(k,\nu) &= \Bigg\{\begin{array}{ll}
						w(u)^2, & \text{for } \nu = w(u)\\
						\infty, & \text{otherwise } 
					\end{array}
\end{align}

Starting with the leaves of the tree $T$ the function $F^u_i$ is computed in the following recursive way.

\begin{equation}
F^{u}_{i}(k,\nu) = \min \left\{\begin{array}{ll}
						\displaystyle \min_{0 \leq l \leq k-1} \{ F^{u}_{i-1}(l, \nu) + F^{u_i}(k-l-1) \}, \\
						\displaystyle \min_{0 \leq l \leq k, 0 \leq \mu \leq \nu} \{ F^{u}_{i-1}(l, \mu) + F^{u_i}(k-l, \nu - \mu) + 2 \mu (\nu - \mu) \} 
					\end{array}\right\} \\
					\label{eq:fi}
\end{equation}

Equation (\ref{eq:fi}) contains two cases that occur when $k$ edges are removed from the partial subtree $T^u_i$:
\begin{itemize}
	\item When the edge $\{u, u_i\}$ is removed and there are $l$ edges removed in $T^u_{i-1}$, then only $k-l-1$ edges can be removed in the subtree $T^u_i$. Also the component which contains node $u$ has the same weight as the component of $T^u_{i-1}$ that contains $u$.
	\item When the edge $\{u, u_i\}$ is not removed the weight of the component containing $u$ differs from the weight in $T^u_{i-1}$. When this component in $T^u_{i-1}$ has weight $\mu$ then the component containing $u$ in $T^{u_i}$ has weight $\nu - \mu$. Thus the factor $2\mu(\nu - \mu) = \nu^2 - \mu ^2 - (\nu - \mu)^2$ provides the correct weight for the new component that contains $u$.
\end{itemize}

Algorithm \ref{treeAlgorithm} shows how these equations can be used to calculate the surprise value of an optimal clustering with respect to surprise with the help of a dynamic program. 

\begin{algorithm}[tbp!]
\caption{Surprise Minimization on Trees}
\label{treeAlgorithm}
\DontPrintSemicolon
\ForEach{$u \in V$}{
	$w(u) = 1$\;
}
\ForEach{$u \in V$ in topological order}{
	\For{$i = 0$ to $t(u)$}{
		\For{$k = 0$ to $\left| E^u \right|$}{
			\For{$\nu = 0$ to $w(T^u)$}{
				Compute $F^{u}_{i}(k,\nu)$, $F^{u}(k,\nu)$ and $F^{u}(k)$\;
			}
		}
	}
}

\Return $\min_{0 \leq k \leq m}{S(m-k, \frac{1}{2} F^r(k) - \frac{1}{2} n)}$\;
\end{algorithm}

\begin{theorem}
 Algorithm \ref{treeAlgorithm} computes the surprise value of an optimal clustering with respect to surprise in $O(n^5)$ time.
\end{theorem}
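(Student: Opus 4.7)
The plan is to first reduce the task to repeated evaluation of \mipShort~and then verify the dynamic program computes it correctly. I would set $w(v) = 1$ for every vertex, so that equation~(\ref{minIP_MACP}) becomes $i_p(\mathcal{C}) = \frac{1}{2}\sum_{C \in \mathcal{C}} |C|^2 - \frac{1}{2} n$. By Proposition~\ref{lem:props_opt_sol}(iv), an optimal surprise clustering of the tree $T$ partitions it into connected subtrees, and since $T$ is a tree, such clusterings are in bijection with sets of removed edges: removing $k$ edges yields exactly $m - k$ intracluster edges. Among all clusterings produced by $k$ edge removals, the one minimizing $\sum_C |C|^2$ is exactly the \mipShort~optimum for parameter $m-k$, with value $\frac{1}{2}F^r(k) - \frac{1}{2} n$. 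By the bicriterial view of Section~\ref{sec:connection} combined with Lemma~\ref{lem:basic_props}(ii), the optimal surprise value equals $\min_{0 \le k \le m} S\bigl(m-k,\, \frac{1}{2}F^r(k) - \frac{1}{2} n\bigr)$, which is precisely what the algorithm returns, provided $F^r(k)$ is computed correctly.

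To establish correctness of the recursion, I would argue by induction on the partial subtrees $T^u_i$ processed in post-order. The base cases (\ref{k0}) and (\ref{i0}) are immediate: with zero cuts the only component of $T^u_i$ is $T^u_i$ itself, and $T^u_0$ consists of the single vertex $u$. For the inductive step of~(\ref{eq:fi}) at $T^u_i$, I would split on whether the edge $\{u, u_i\}$ is cut. If it is cut, the component containing $u$ lies entirely in $T^u_{i-1}$, the remaining $k-l-1$ cuts may occur anywhere in $T^{u_i}$ with no constraint on the $u_i$-component's weight, and the two sides contribute independently as $F^u_{i-1}(l, \nu) + F^{u_i}(k-l-1)$. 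If the edge is kept, the $u$-component merges a weight-$\mu$ piece from $T^u_{i-1}$ with a weight-$(\nu-\mu)$ piece from $T^{u_i}$; the algebraic identity $\nu^2 = \mu^2 + (\nu-\mu)^2 + 2\mu(\nu-\mu)$ shows that the only correction needed over $F^u_{i-1}(l, \mu) + F^{u_i}(k-l, \nu-\mu)$ is the additive term $2\mu(\nu-\mu)$, which yields the second line of~(\ref{eq:fi}). Taking the minimum over all valid splits covers every partition of $T^u_i$.

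The main subtlety here is the second case, where the DP must avoid double-counting the contribution of the merged $u$-component; the identity above is precisely what validates the correction term. The runtime follows from a direct accounting. Since $w(v) = 1$, the parameter $\nu$ and the edge-count $k$ each range over $O(n)$ values, and $\sum_{u \in V}(t(u)+1) = (n-1) + n = O(n)$, so there are at most $O(n^3)$ entries $F^u_i(k, \nu)$ in total. Each entry is computed by a minimization over $l$ and $\mu$ taking $O(k\nu) = O(n^2)$ time, giving $O(n^5)$ overall. Aggregating $F^u(k) = \min_\nu F^u_{t(u)}(k, \nu)$ for all $u, k$ adds only $O(n^3)$, and the final scan over $k$ is dominated, so the total running time is $O(n^5)$.
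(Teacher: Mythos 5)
Your proposal is correct and follows essentially the same route as the paper: reduce surprise optimization on trees to solving \mipShort{} (equivalently, unit-weight MACP) for every $k$ via Eq.~\eqref{minIP_MACP}, observe that connected clusterings of a tree correspond to edge removals, and then account for the $O(n)$ node--child pairs, $O(n^2)$ choices of $(k,\nu)$, and $O(n^2)$ work per table entry to get $O(n^5)$. The only difference is that you additionally spell out the inductive correctness of recursion~\eqref{eq:fi}, which the paper justifies informally in the text preceding the theorem rather than inside the proof itself.
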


\begin{proof}
Consider a solution of MACP on a tree deleting $k$ edges with value $F^r(k)$. 
The induced clustering has $i_e = m-k$ edges inside clusters and the number of intracluster pairs is $i_p = \frac{1}{2} F^r(k) - \frac{1}{2} n$ (see Eq. \eqref{minIP_MACP} in Sect.~\ref{sec:complexity}).
Hence, an optimal clustering of \mipShort~with parameter $i_e = m-k$ corresponds to an optimal clustering of MACP with parameter $k$ and vice versa. Thus, the optimal surprise value of a clustering for fixed $k$ is $S(m-k, \frac{1}{2} F^r(k) - \frac{1}{2} n)$.
As explained in Sect.~\ref{sec:connection}, the global optimum can then be found by checking all possibilities for $0 \leq k \leq m$.

A lookup on every node in $T$ and its children can be done in $O(n)$. There are $O(w(T^u)^2)$ possible combination of the variables $k$ and $\nu$ for each node $u$ and its subtree $T^u$. Further, $w(T^u)$ is trivially bounded by $n$. The computation of one $F^u_i(k,\nu)$ can be done in $O(n^2)$ and thus, the overall complexity of Algorithm \ref{treeAlgorithm} is in $O(n^5)$. \qed
\end{proof}

Algorithm \ref{treeAlgorithm} only returns the value of an optimal clustering, not the clustering itself. Nevertheless, it can be modified in a straightforward way to return an optimal clustering instead of its surprise value without a loss in running time, which yields Theorem \ref{theoremTrees}.

\section{Heuristics for Linear Programs}
  \label{app:heuristics}
  We tried the following modifications to further decrease the running time to compute exact solutions:
  \begin{itemize}
  \item \emph{Prune small $k$ (PSK)}: We first determine the clustering into cliques that maximizes the number $k_\mathrm{start}$ of intracluster edges.
    This can be done by dropping~(\ref{con:exactly_k}), substituting~(\ref{eq:basic_objective}) by
  \begin{equation}
    \text{maximize }\sum_{\{u,v\} \in E} \Xer{u}{v}
  \end{equation}
  and setting $\Xer{u}{v}=0$ for all vertex pairs $\{u,v\}$ not connected by an edge.
  Proposition~\ref{lem:props_opt_sol}(iii) then yields that we do not have to consider clusterings with less than $k_\mathrm{start}$ intracluster edges.
       This is in fact a special case of the gap variant, but as solving the modified ILP is usually very fast, its usage potentially decreases the overall running time for all variants. 
  \item \emph{Testing for Feasibility (TF)}:
  From the value $S$ of the best current solution, we can compute for each $k$ the largest $i_p$ such that $S(i_p,k) < S$.
  This can be modeled as an additional constraint; if this makes the model infeasible, we can safely proceed to the next $k$.
  The downside of this approach is that the lower bounds for the gap and relaxed variant are updated less often.
  However, it potentially decreases the time to solve individual ILPs in case the model is not feasible.
  
  \item \emph{Enforce many intraedges (EMI)}:
    To enforce that the clustering we obtain by the linear program for the relaxed variant has the most intracluster edges among all valid clusterings that minimize the number of intracluster pairs, and therefore yields the best upper bound, we replace (\ref{eq:basic_objective}) by 
  \begin{equation}
    \text{minimize } \ m \cdot \left(\sum_{\{u,v\} \in \binom{V}{2}} \Xer{u}{v}\right) - \sum_{\{u,v\} \in E} \Xer{u}{v}
  \end{equation}
    Similarly, for the gap variant, we replace~(\ref{eq:gap_objective}) by
       \begin{equation}
    \text{minimize } \ m \cdot \left(\sum_{\{u,v\} \in \binom{V}{2}} \Xer{u}{v} - \sum_{\{u,v\} \in E} \Xer{u}{v}\right) - \sum_{\{u,v\} \in E} \Xer{u}{v} 
  \end{equation}
    Obviously, this does not make sense for the exact variant.
  \end{itemize}
  
  Table~\ref{tab:running_times_modifications} shows an overview of running times and the number of solved ILPs of the different strategies on the test instances from Sect.~\ref{sec:algorithms}.
  \begin{table}[tbp]
   \caption{Running times in seconds of successive ILP approach, different strategies.}
   \label{tab:running_times_modifications}
   \begin{center}
   \begin{footnotesize}
\begin{tabular}{|cccc|cc|cc|cc|cc|} 
\hline \multicolumn{4}{|c|}{} & \multicolumn{2}{c|}{lesmis} & \multicolumn{2}{c|}{karate} & \multicolumn{2}{c|}{grid6} & \multicolumn{2}{c|}{dolphins} \\ 
var & TF & PSK & EMI & ILP & t(s) & ILP & t(s) & ILP & t(s) & ILP & t(s) \\ 
\hline e & n &n & - & 255 & 1194 & 79 & 51 & 61 & 470 & 160 & 497 \\ 
e & n &y & - & 119 & 1149 & 54 & 50 & 43 & 470 & 104 & 489 \\ 
e & y &n & - & 255 & 315 & 79 & 15 & 61 & 689 & 160 & 164 \\ 
e & y &y & - & 119 & 272 & 54 & 14 & 43 & 684 & 104 & 152 \\ 
r & n &n & n & 176 & 283 & 49 & 21 & 42 & 449 & 107 & 163 \\ 
r & n &n & y & 176 & 373 & 49 & 31 & 42 & 2091 & 107 & 383 \\ 
r & n &y & n & 41 & 254 & 25 & 20 & 25 & 448 & 52 & 158 \\ 
r & n &y & y & 41 & 353 & 25 & 30 & 25 & 2091 & 52 & 378 \\ 
r & y &n & n & 254 & 302 & 78 & 15 & 60 & 1154 & 159 & 218 \\ 
r & y &n & y & 254 & 354 & 78 & 15 & 60 & 1698 & 159 & 537 \\ 
r & y &y & n & 119 & 264 & 54 & 15 & 43 & 1129 & 104 & 212 \\ 
r & y &y & y & 119 & 323 & 54 & 14 & 43 & 1700 & 104 & 539 \\ 
g & n &n & n & 112 & 206 & 39 & 15 & 37 & 402 & 91 & 147 \\ 
g & n &n & y & 23 & 165 & 18 & 18 & 15 & 1904 & 45 & 131 \\ 
g & n &y & n & 29 & 186 & 20 & 15 & 20 & \textbf{398} & 50 & 143 \\ 
g & n &y & y & 23 & \textbf{162} & 18 & 18 & 15 & 1896 & 45 & 131 \\ 
g & y &n & n & 195 & 259 & 73 & 15 & 56 & 1774 & 144 & 168 \\ 
g & y &n & y & 107 & 221 & 52 & \textbf{12} & 38 & 2224 & 100 & 110 \\ 
g & y &y & n & 112 & 245 & 54 & 14 & 39 & 1724 & 103 & 160 \\ 
g & y &y & y & 107 & 222 & 52 & \textbf{12} & 38 & 2214 & 100 & \textbf{108} \\ 
\hline \end{tabular} 
  \end{footnotesize}
  \end{center}
  \end{table}
 
In almost all cases, the PSK heuristic is able to decrease the running time slightly.

Enforcing many intracluster edges always increased the running time of the relaxed variant; the average running time for each linear program increases and in none of our examples it helped to decrease their number.
For the gap variant, this modification was beneficial in most cases.
However, for \texttt{grid6}, the running time increased by almost a factor of five compared to the gap variant without modifications.

Similarly, testing for feasibility is beneficial in combination with the gap and relaxed variant in about half of the cases, but on \texttt{grid6}, it increases their running time significantly.

Overall, the unmodified version of the gap variant was always faster than any version of the relaxed or exact one.
Among the versions of the gap variant, the one that uses only PSK and the one with all modifications exhibit good overall behavior, while the former seems to be more robust.

\end{document}